         \let\leq=\leqslant
       \let\ge=\geqslant  \let\geq=\geqslant
\newsavebox{\astrutbox}
\sbox{\astrutbox}{\rule[-5pt]{0pt}{20pt}}
\newtheorem{theorem}{Theorem}[section]
\newdefinition{definition}[theorem]{Definition}
\providecommand{\mA}{\ensuremath{\textbf{A}}}
\providecommand{\mD}{\ensuremath{\textbf{D}}}
\providecommand{\mI}{\ensuremath{\textbf{I}}}
\providecommand{\mM}{\ensuremath{\textbf{M}}}
\providecommand{\mP}{\ensuremath{\textbf{P}}}
\providecommand{\mQ}{\ensuremath{\textbf{Q}}}
\providecommand{\vb}{\ensuremath{\textbf{b}}}
\providecommand{\ve}{\ensuremath{\textbf{e}}}
\providecommand{\vg}{\ensuremath{\textbf{g}}}
\providecommand{\vr}{\ensuremath{\textbf{r}}}
\providecommand{\vs}{\ensuremath{\textbf{s}}}
\providecommand{\vv}{\ensuremath{\textbf{v}}}
\providecommand{\vx}{\ensuremath{\textbf{x}}}
\providecommand{\vy}{\ensuremath{\textbf{y}}}
\providecommand{\vz}{\ensuremath{\textbf{z}}}
\colorlet{TufteRed}{red!80!black}
\definecolor{halfgray}{gray}{0.55}
\definecolor{subtleblue}     {rgb}{0.02,0.04,0.48}
\definecolor{subtlered}      {rgb}{0.65,0.04,0.07} 
\definecolor{subtlegreen}    {rgb}{0.06,0.44,0.08}
\definecolor{subtledarkblue} {rgb}{0,.1,.6}
\definecolor{lightsubtleblue}{rgb}{0,.4,.6}
\definecolor{ecru}           {rgb}{1.0,.98823,.95686}   
\definecolor{stanfordred}      {rgb}{0.6431,0.000,0.1137} 
\definecolor{stanfordsandstone}{rgb}{0.9059,0.8196,0.6039}
\definecolor{stanfordblue}     {rgb}{0.1451,0.5176,0.7333}
\definecolor{stanfordgreen}    {rgb}{0.1608,0.3961,0.2863} 
\definecolor{stanforddarkgray} {rgb}{0.2627,0.2902,0.2667} 
\definecolor{stanfordlightgray}{rgb}{0.8392,0.8667,0.8275} 
\definecolor{stanforddarkgreen}{rgb}{0.2353,0.2118,0.1373}
\definecolor{stanforddeepred}  {rgb}{0.6510,0.2275,0.0000}
\definecolor{stanfordneutralkhaki}{rgb}{0.5686,    0.5333,    0.4510}
\definecolor{stanfordbrightgreen}{rgb}{0.0039 ,   0.5137  ,  0.3725}
\definecolor{stanfordbrightblue}{rgb}{0.1451,0.5176, 0.7333} 
\definecolor{stanfordbrightseagreen}{rgb}{0.0000,0.5020,0.5529} 
\definecolor{stanfordbrightyellow}{rgb}{0.9412,0.6863,0.0000} 
\definecolor{stanfordbrightwine}{rgb}{0.2353,0.0667,0.0275}
\newcommand{\PreserveBackslash}[1]{\let\temp=\\#1\let\\=\temp}
\newcolumntype{L}[1]{>{\PreserveBackslash\RaggedRight}m{#1}}
\newcolumntype{M}[1]{>{\PreserveBackslash\RaggedRight}p{#1}}
\newcolumntype{R}[1]{>{\PreserveBackslash\RaggedLeft}m{#1}}
\newcolumntype{S}[1]{>{\PreserveBackslash\RaggedLeft}p{#1}}
\newcolumntype{Z}[1]{>{\PreserveBackslash\Centering}m{#1}}
\newcolumntype{A}[1]{>{\PreserveBackslash\Centering}p{#1}}
\newcolumntype{U}{>{\setlength{\RaggedRightParindent}{0pt}\RaggedRight\arraybackslash\noindent}X}
\newcolumntype{V}{>{\RaggedLeft\arraybackslash}X}
\newcolumntype{W}{>{\Centering\arraybackslash}X}
\lstdefinelanguage{matlabfloz}{%
  alsoletter={...},%
  morekeywords={
  break,case,catch,continue,elseif,else,end,for,function,global,%
  if,otherwise,persistent,return,switch,try,while,...,ones,zeros,eye},%
  comment=[l]\%,
  morecomment=[l]...,
  morestring=[m]',
}[keywords,comments,strings]%
\newcommand{\inv}{^{-1}}\newcommand{\eps}{\varepsilon}
\newcommand{\epsmn}{\eps_{\min}}
\newcommand{\epsmx}{\eps_{\max}}
\newcommand{\epscu}{\eps_{\text{cur}}}
\newcommand{\hvx}{\hat{\vx}}
\newcommand{\hvy}{\hat{\vy}}
\newcommand{\vvk}[2]{\ensuremath{\textbf{#1}}^{(#2)}}
\newcommand{\vol}{\text{vol}}
\newcommand{\pprg}{\texttt{ppr-grid}\xspace}
\newcommand{\ppra}{\texttt{ppr-path}\xspace}
\newcommand{\ppath}{\ppra}
\newcommand{\pgrid}{\pprg}
\newcommand{\pgrow}{\texttt{ppr-grow}\xspace}
\newfont{\mycrnotice}{ptmr8t at 7pt}
\newfont{\myconfname}{ptmri8t at 7pt}
\title[]{Seeded PageRank Solution Paths}
\author[D. F. Gleich and K. Kloster]{%
D.~F.~Gleich$\,^1$, \and K.~Kloster$\,^2$
}
\affiliation{%
  $^1\,$Department of Computer Science, Purdue University, West Lafayette IN, USA\\
    email\textup{\nocorr: \texttt{dgleich@purdue.edu}}\\
  $^2\,$Department of Mathematics, Purdue University, West Lafayette IN, USA\\
    email\textup{\nocorr: \texttt{kkloste@purdue.edu}}\\
}
\begin{document}

\label{firstpage}
\maketitle

\begin{abstract}%
We study the behavior of network diffusions based on the PageRank random walk from a set of seed nodes. These diffusions are known to reveal small, localized clusters (or communities) and also large macro-scale clusters by varying a parameter that has a dual-interpretation as an accuracy bound and as a regularization level. We propose a new method that quickly approximates the result of the diffusion for all values of this parameter.
Our method efficiently generates an approximate \emph{solution path} or \emph{regularization path} associated with a PageRank diffusion, and it reveals cluster structures at multiple size-scales between small and large. We formally prove a runtime bound on this method that is independent of the size of the network, and we investigate multiple optimizations to our method that can be more practical in some settings.
We demonstrate that these methods identify refined clustering structure on a number of real-world networks with up to 2 billion edges. 

\end{abstract}

\begin{keywords} 
05C81 Random walks on graphs;
05C50 Graphs and linear algebra (matrices, eigenvalues, etc.); 
90C35 Programming involving graphs or networks; 
91D30 Social networks; 
05C82 Small world graphs, complex networks
\end{keywords}


\section{Introduction}
\label{sec:intro}

Networks describing complex technological and social systems display many types of structure. One of the most important types of structure is clustering because it reveals the modules of technological systems and communities within social systems. A tremendous number of methods and objectives have been proposed for this task (survey articles include refs.~\cite{Schaeffer-2007-clustering,Xie-2013-overlapping}). The vast majority of these methods seek large regions of the graph that display evidence of local structure. For the case of modularity clustering, methods seek statistically anomalous regions; for the case of conductance clustering, methods seek 
dense regions that are weakly connected to the rest of the graph.
All of the objective functions designed for these clustering approaches implicitly or explicitly navigate a trade-off between cluster size and the underlying clustering signal. For example, large sets tend to be more anomalous than small sets. 
Note that these trade-offs are essential to multi-objective optimization, and the choices in the majority of methods are natural.
Nevertheless, directly optimizing the objective makes it difficult to study these structures as they vary in size from small to large because of these implicit or explicit biases. This intermediate regime represents the meso-scale structure of the network. 

In this manuscript, we seek to study structures in this meso-scale regime 
by analyzing the behavior of seeded graph diffusions.
Seeded graph diffusions model the behavior of a quantity of ``dye" that is continuously injected at a small set of vertices called the \emph{seeds} and distributed along the edges of the graph. These seeded diffusions can reveal multi-scale features of a graph through their dynamics. The class we study can be represented in terms of a column-stochastic distribution operator $\mP$: 
\[ \vx = \textstyle \sum_{k=0}^\infty \gamma_k \mP \vs \]
where $\gamma_k$ are a set of diffusion coefficients that reflect the behavior of the dye $k$ steps away from the seed, and $\vs$ is a sparse, stochastic vector representing the seed nodes.
More specifically, we study the PageRank diffusions 
\[ \vx = \textstyle \sum_{k=0}^\infty (1-\alpha) \alpha^k \mP \vs. \]
The PageRank diffusion is equivalent to the stationary distribution of a random walk that (i) with probability $\alpha$, follows an edge in the graph and (ii) with probability $(1-\alpha)$ jumps back to a seed vertex (see Section~\ref{sec:prelims} more detail on this connection).


PageRank itself has been used for a broad range of applications including data mining, machine learning, biology, chemistry, and neuroscience; see our recent survey~\cite{Gleich-2015-prbeyond}.
Among all the uses of PageRank, the \emph{seeded variation} is frequently used to localize the PageRank vector within a subset of the network; this is also known as \emph{personalized PageRank} due to its origins on the web, or \emph{localized PageRank} because of its behavior.(We will use these terms: seeded PageRank, personalized PageRank, and localized PageRank interchangeably and use the standard acronym PPR to refer to them.)  Perhaps the most important justification for this use is presented in~\cite{andersen2006-local}, where the authors
determined a relationship between seeded PageRank vectors and low-conductance sets that allowed them to create a type of graph partitioning method that does not need to see the entire graph. Their PageRank-based clustering method, called the \emph{push method}, has been used for a number of important insights into communities in large social and information networks~\cite{jeub2015locally,Leskovec-2009-community-structure}.


Our focus is a novel application of this push method for meso-scale structural analysis of networks. Push, which we'll describe formally in Section~\ref{sec:prelim-push}, depends on an accuracy parameter $\eps$. As we vary $\eps$, the result of the push method for approximating the PageRank diffusion reveals different structures of the network. We illustrate three PageRank vectors as we vary $\eps$ for Newman's network science collaboration graph~\cite{Newman-2006-eigenvectors} in Figure~\ref{fig:netsci}. There, we see that the solution vectors for PageRank
that result from push have only a few non-zeros for large values of $\eps$. 
(Aside: There is a subtle inaccuracy in this statement. As we shall see shortly, we actually are describing degree normalized PageRank values. This difference does not affect the non-zero components or the intuition behind the discussion.)
This is interesting because an accurate PageRank vector is mathematically non-zero everywhere in the graph. Push, with large values of $\eps$, then produces sparse approximations to the PageRank vector. This connection is formal, and the parameter $\eps$ has a dual interpretation as a sparsity regularization parameter~\cite{Gleich-2014-alg-anti-diff} (reviewed in Section~\ref{sec:regularization}).

\begin{figure*}[t]
	\centering
	\subfigure[$\eps = 10^{-2}$]{\includegraphics[width=0.3\linewidth]{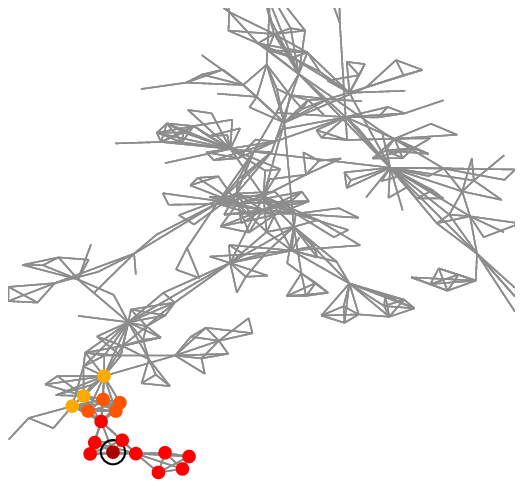}}
	\subfigure[$\eps = 10^{-3}$]{\includegraphics[width=0.3\linewidth]{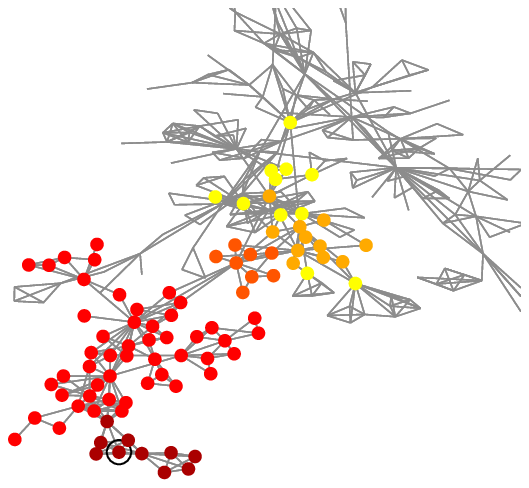}}
	\subfigure[$\eps = 10^{-4}$]{\includegraphics[width=0.3\linewidth]{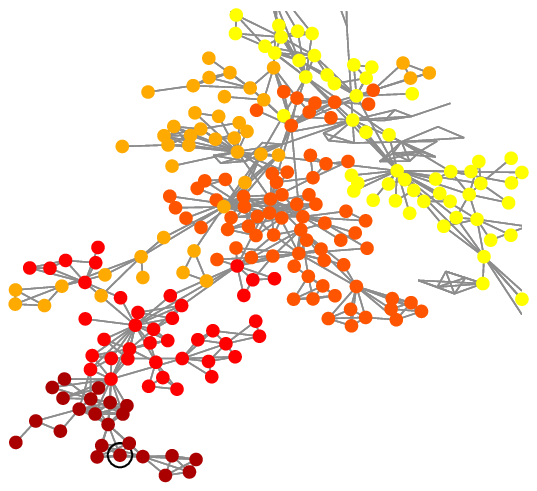}}
	\caption{Nodes colored by their degree-normalized PageRank values as $\eps$ varies: dark red is large, yellow is small.
		The hidden nodes are mathematically zero. As $\eps$ decreases, more nodes become non-zero.
	}
	\label{fig:netsci}
\end{figure*}

The solution path or regularization path for a parameter is the set of trajectories that the components of the solution trace out as the parameter varies~\cite{Efron-2004-lars}.  We present new algorithms based on the \emph{push procedure} that allow us to approximate the solution path trajectories as a function of $\eps$. We use our solution path approximation to explore the properties of graphs at many size-scales in Section~\ref{sec:paths}. In our technical description, we show that the solution path remains localized in the graph (Theorem~\ref{thm:ppra}). Experiments show that it runs on real-world networks with millions of nodes in less than a second (Section~\ref{sec:experiments}).

The push method has become a frequently-used graph mining primitive because of the sparsity of the vectors that result from when push is used to approximate the seeded PageRank diffusion, along with the speed at which they can be computed.
The method is typically used to identify sets of low-conductance in a graph as part of a community or cluster analysis~\cite{Ghosh-2014-cheeger,Gleich-2012-neighborhoods,Gutierrez-Bunster-2014-biology-networks,jeub2015locally,Leskovec-2009-community-structure,Whang-2013-overlapping}.
In these cases, the insights provided by the solution paths are unlikely to be necessary. Rather, what is needed is a faster way to compute these diffusions for many values of $\eps$.
We describe a data structure called a \emph{shelf} that we demonstrate
can use 40 times as many values of $\eps$ in only 7 times the runtime (Section~\ref{sec:runtime-grid}).

We plan to make our computational codes available in the spirit of reproducible research.%

\section{Technical Preliminaries}\label{sec:prelims}
We first fix our notation and review the Andersen-Chung-Lang procedure, which forms the basis for many of our contributions.  We denote a graph by $G = (V,E)$, where $V$ is the set of nodes and $E$ the set of edges. All graphs we consider are simple, connected, and undirected. Let $G$ have $n = |V|$ nodes and fix a labeling of the graph nodes using the numbers $1$, $2$, $\dots$, $n$. We refer to a node by its label. For each node $j$ we denote its degree by $d_j$.  

The \emph{adjacency matrix} of the graph $G$, which we denote by $\mA$, is the $n\times n$ matrix having $A_{i,j} = 1$ if nodes $i$ and $j$ are connected by an edge, and 0 otherwise. Since $G$ is simple and undirected, $\mA$ is symmetric with 0s on the diagonal. The matrix $\mD$ denotes the diagonal matrix with entry $(i,i)$ equal to the degree of node $i$, $d_i$. Since $G$ is connected, $\mD$ is invertible, and we can define the \emph{random walk transition matrix} $\mP := \mA\mD\inv$. 

We denote by $\ve_j$ the standard basis vector of appropriate dimensions with a 1 in entry $j$, and by $\ve$ the vector of all 1s.
In general, we use subscripts on matrices and vectors to denote entries, e.g. $A_{i,j}$ is entry $(i,j)$ of matrix $\mA$; the notation for standard basis vectors, $\ve_j$, is an exception. Superscripts refer to vectors in a sequence of vectors, e.g. $\vvk{x}{k}$ is the $k$th vector in a sequence.

For any set of nodes, $S \subseteq V$, we define the \emph{volume} of $S$ to be the sum of the degrees of the nodes in $S$, denoted $\vol(S) = \sum_{j \in S} d_j$. Next, define the \emph{boundary} of $S \subseteq V$ to be the set of edges that have one endpoint inside $S$ and the other endpoint outside $S$, denoted $\partial(S)$. Finally, the \emph{conductance} of $S$, denoted $\phi(S)$, is defined by
\[
\phi(S) := \frac{|\partial(S)|}{\min\{\vol(S), \vol(V-S) \} }.
\]
Conductance can be thought of as measuring the extent to which a set is more connected to itself than the rest of the graph and is one of the most commonly used community detection objectives~\cite{Schaeffer-2007-clustering}.

\subsection{PageRank and Andersen-Chung-Lang Method}

The Andersen-Chung-Lang method uses PageRank vectors to identify a set of small conductance focused around a small set of starting nodes~\cite{andersen2006-local}. We call such starting nodes \emph{seed sets} and the resulting communities, \emph{local} communities. We now briefly review this method starting with PageRank.

For a stochastic matrix $\mP$, a stochastic vector $\vv$, and a parameter $\alpha \in (0,1)$ we define the PageRank diffusion as the solution $\vx$ to the linear system
 \begin{equation}\label{eqn:prls}
 (\mI - \alpha \mP) \vx = (1-\alpha)\vv.
 \end{equation}
Note that when $\alpha \in (0,1)$ the system in \eqref{eqn:prls} can be solved via a Neumann series expansion, and so the solution $\vx$ to this linear system is equivalent to the PageRank diffusion vector described in Section~\ref{sec:intro}. 
When $\vv = (1/|S|)\ve_S$, i.e. the indicator vector for a seed set $S$, normalized to be stochastic, then we say the PageRank vector has been \emph{seeded} on the set $S$ (or \emph{personalized} on the set $S$). 

Given PageRank diffusion scores $\vx$, the Andersen-Chung-Lang procedure uses the values $\vx_j/d_j$ to determine an order for a sweep-cut procedure (described below) that identifies a set of good conductance. Thus, we would like to bound the error in approximating the values $\vx_j/d_j$. Specifically (for their theory) we need our approximate solution $\hvx$ to satisfy 
\begin{equation} 0 \leq \vx_j - \hvx_j < \eps d_j \qquad \text{or equivalently,} \qquad 
\label{eqn:conv-vec}
\vx \geq \hvx,  \text{ and } \|\mD\inv(\vx - \hvx)\|_{\infty} < \eps.
\end{equation}

Once a PPR diffusion $\vx$ is computed to this accuracy, a near-optimal conductance set located nearby the seed nodes is generated from the following a \emph{sweep cut} procedure.
Rank the nodes in descending order by their scaled diffusion scores $\vx_j/d_j$ , with large scores ranking the highest.
Denote the set of nodes ranked 1 through $m$ by $S(m)$. Iteratively compute the conductance of the sets $S(m)$ for $m = 2$, $3$, $\dots$, until $\vx_m/d_m =0$. Return the set $S(t)$ with the minimal conductance. This returned set is related to the optimal set of minimum conductance nearby the seed set through a localized Cheeger inequality~\cite{andersen2006-local}. The value of $\eps$ relates to the possible size of the set. 

\section{The push procedure}
\label{sec:prelim-push}
\label{sec:alg-push}
The push procedure is an iterative algorithm to compute a PageRank vector to satisfy the approximation~\eqref{eqn:conv-vec}.  The distinguishing feature is that it can accomplish this goal with a sparse solution vector, which it can usually generate without ever looking at the entire graph or matrix. This procedure allows the Andersen-Chung-Lang procedure to run without ever looking at the entire graph. As we discussed in the introduction, this idea and method are at the heart of our contributions and so we present the method in some depth. 

At each step, push updates only a single coordinate of the approximate solution like a coordinate relaxation method. We'll describe its behavior in terms of a general linear system of equations. Let $\mM\vx = \vb$ be a square linear system with 1s on the diagonal, i.e.~$M_{i,i} = 1$ for all $i$.
Consider an iterative approximation $\vvk{x}{k} \approx \vx$ after $k$ steps. The corresponding residual is $\vvk{r}{k} = \vb - \mM \vvk{x}{k}$. Let $j$ be a row index where we want to \emph{relax}, i.e.~locally solve, the equation, and let $r$ be the residual value there, $r = \vvk{r}{k}_j$.
We update the solution by adding $r$ to the corresponding entry of the solution vector, $\vvk{x}{k+1} = \vvk{x}{k} + r\ve_{j}$, in order to guarantee $\vvk{r}{k+1}_j = 0$. The residual can be efficiently updated in this case. Thus, the push method involves the operations: 
\begin{align}
	\vvk{x}{k+1} &= \vvk{x}{k} + r\ve_{j} \nonumber \\
	\vvk{r}{k+1} &= \vvk{r}{k} - r\mM\ve_{j}. \label{eqn:resupdate}
\end{align}

Note that the iteration requires updating just one entry of $\vvk{x}{k}$ and accessing only a single column of the matrix $\mM$. It is this local update that enables push to solve the seeded PageRank diffusion especially efficiently. 

\subsection{The Andersen-Chung-Lang Push Procedure for PageRank}
\label{sec:acl-push-proc}
The full algorithm for the push method applied to the PageRank linear system to compute a solution that satifies~\eqref{eqn:conv-vec} for a seed set $S$ is:
\begin{enumerate}
 \item[ 1. ]  Initialize $\vx = 0, \vr = (1-\alpha) \ve_S$ using sparse data structures such as a hash-table.
 \item[ 2. ] Add any coordinate $i$ of $\vr$ where $\vr_i \ge \eps d_i$ to a queue $Q$.
 \item[ 3. ] While $Q$ is not empty
 \item[ 4. ] \hspace*{1em} Let $j$ be the coordinate at the front of the queue and pop this element.
 \item[ 5. ] \hspace*{1em} Set $\vx_j = \vx_j + \vr_j$
 \item[ 6. ] \hspace*{1em} Set $\delta = \alpha \vr_j / d_j$
 \item[ 7. ] \hspace*{1em} Set $\vr_j = 0$
 \item[ 8. ] \hspace*{1em} For all neighbors $u$ of node $j$
 \item[ 9. ] \hspace*{2em} Set $\vr_u = \vr_u + \delta$
 \item[10. ] \hspace*{2em} If $\vr_u$ exceeds $\eps d_u$ after this change, add $u$ to $Q$.
\end{enumerate}
The queue maintains a list of all coordinates (or nodes) where the residual is larger than $\eps d_j$. We choose coordinates to relax from this queue. Then we execute the push procedure to update the solution and residual. The residual update operates on only the nodes that neighbor the updated coordinate $j$. Once elements in the residual exceed the threshold, they are entered into the queue. We present the convergence theory for this method in the description of our new algorithms (Section~\ref{sec:alg}).

We have presented the push method so far from a linear solver perspective.
To instead view the method from a graph diffusion perspective, think of the solution vector as tracking where ``dye'' has concentrated in the graph and the residual as tracking where ``dye'' is still spreading.
At each step of the method, we find a node with a sufficiently large amount of dye left (Step 4), concentrate it at that node (Step 5), then update the amount of dye that is left in the system as a result of concentrating this quantity of dye (Lines 6-10). The name \emph{push} comes from the pattern of concentrating dye and \emph{pushing} newly unprocessed dye to the adjacent residual entries. 

Note that the value of $\eps$ plays a critical role in this method as it determines the entries that enter the queue. When $\eps$ is large, only a small number of coordinates or nodes will ever enter the queue. This will result in a sparse solution. As $\eps \to 0$, there will be substantially more entries that enter the queue.


\subsection{Implicit regularization from Push}
\label{sec:regularization}
To understand the sparsity that results from the push method, we introduce a slight variation on the standard push procedure.
Rather than using the full update $\vx_j + \vr_j$ and pushing $\alpha \vr_j / d_j$ to the adjacent residuals, we consider a method that takes a partial update. The form we assume is that we will leave $\eps d_j \rho$ ``dye'' remaining at node $j$. For $\rho = 0$, this correspond to the push procedure described above. For $\rho = 1$, this update will remove node $j$ from the queue, but push as little mass as possible to the adjacent nodes such that the dye at node $j$ will remain below $\eps d_j$.
The change is just at steps 5-7:
\begin{enumerate}
 \item[ 5'. ] \hspace*{1em} Set $\vx_j = \vx_j + (\vr_j - \eps d_j \rho)$
 \item[ 6'. ] \hspace*{1em} Set $\delta = \alpha (\vr_j - \eps d_j \rho) / d_j$
 \item[ 7'. ] \hspace*{1em} Set $\vr_j = \eps d_j \rho$
\end{enumerate}
In previous work~\cite[Theorem 3]{Gleich-2014-alg-anti-diff}, we showed that $\rho = 1$ produces a solution vector $\vx$ that exactly solves a related 1-norm regularized optimization problem. The form of the problem that $\vx$ solves is most cleanly stated as a quadratic optimization problem in $\vz$, a degree-based rescaling of the solution variable $\vx$:
\begin{equation} \label{eq:ppr-regularized}
 \begin{array}{ll}
  \text{minimize} & \displaystyle \frac{1}{2} \vz^T \mQ \vz - \vz^T \vg + C \eps \|{\mD \vz}\|_1 \\
  \text{subject to} & \vz \ge 0
 \end{array}
\end{equation}
The terms of the normalization $\vx$ vs.~$\vz$ and the equivalence $\mQ, \vg, C$ are tedious to state exactly and uninformative to our purposes in this work. The important point is that $\eps$ can also be interpreted as a regularization parameter that governs the sparsity of the solution vector $\vx$. Large values of $\eps$ increase the magnitude of the 1-norm regularizer and thus cause the solutions to be sparser. Moreover, the resulting solutions are unique as the above problem is strongly convex. 

In this work, we seek algorithms to compute the solution paths or regularization paths that result from trying to use all values of $\eps$ to fully study the behavior of the diffusion. In the next section we explore some potential utilities of these paths before presenting our algorithms for computing them in Section~\ref{sec:alg}.

\section{Personalized PageRank paths}
\label{sec:paths}
\label{sec:path-analysis}

In this section we aim to show the types of insights that our solution path methodology can provide.
We should remark that these are primarily designed for human interpretation. Our vision is that they would be used by an analyst that was studying a network and needed to better understand the ``region'' around a target node. These solution paths would then be combined with something like a graph layout framework to study these patterns in the graph. Thus, much of the analysis here will be qualitative. We demonstrate quantative advantages to the path methodology in subsequent sections. 

\subsection{Exact paths and fast path approximations}
The exact solution path for the seeded PageRank diffusion results from solving the regularized optimization problem~\eqref{eq:ppr-regularized} itself for all values of $\eps$. This could be accomplished by using ideas similar to those used to compute solution paths for the Lasso regularizer~\cite{Efron-2004-lars}. Our algorithms and subsequent analysis evaluate approximate solution paths that result from using our push-based algorithm with $\rho=0.9$ (Section~\ref{sec:alg-path}). In this section, we compare these approximate paths to the exact paths. We find that, while the precise numbers change, the qualitative properties are no different. 

Figure~\ref{fig:netsci-exact} shows the results of such a comparison on Newman's netscience dataset (379 nodes, 914 edges~\cite{Newman-2006-eigenvectors}).
Each curve or line in the plot represents the value of a non-zero entry of an approximate PageRank vector $\vx_{\eps}$ as $\eps$ varies (horizontal axis). As $\eps$ approaches 0 (and $1/\eps$ approaches $\infty$), each approximate PageRank entry approaches its exact value in a monotonic manner. Alternatively, we can think of each line as the diffusion value of a node as the diffusion process spreads across the graph. 

One of the plots was computed by solving for the optimality conditions of~\eqref{eq:ppr-regularized}; the other plot was computed using the PPR path algorithm from Section~\ref{sec:alg-path}. The values of $\eps$ are automatically determined by the algorithm itself. The plots show that for the two sets of paths have essentially identical qualitative features. For example, they reveal the same bends and inflections in individual node trajectories, as well as large gaps in PageRank values. The maximum difference between the two paths never exceeds $1.1\cdot10^{-4}$.

These results were essentially unchanged for a variety of other sample diffusions we considered, and so we decided that using $\rho=0.9$ was an acceptable compromise between speed and exactness. Thus, all path plots in this paper were created with $\rho=0.9$, unless noted otherwise. (For analysis of the \emph{differences} of the exact paths and $\rho$-paths, and in particular the behavior of the $\rho$-approximate paths as $\rho$ varies, see Figure~\ref{fig:rho-scaling} below.)

\begin{figure*}[h]
\centering
 \includegraphics[width=0.5\linewidth]{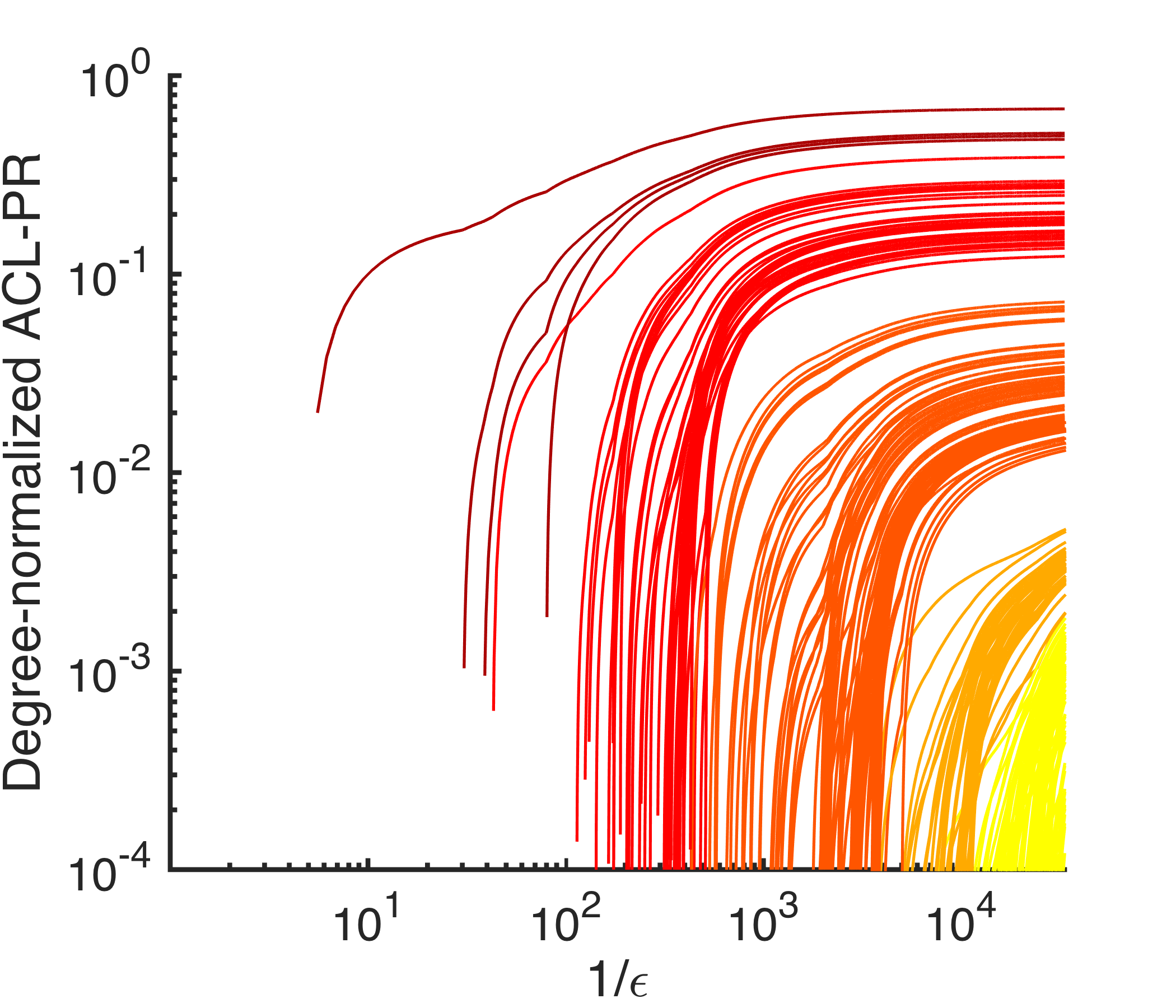}%
 \includegraphics[width=0.5\linewidth]{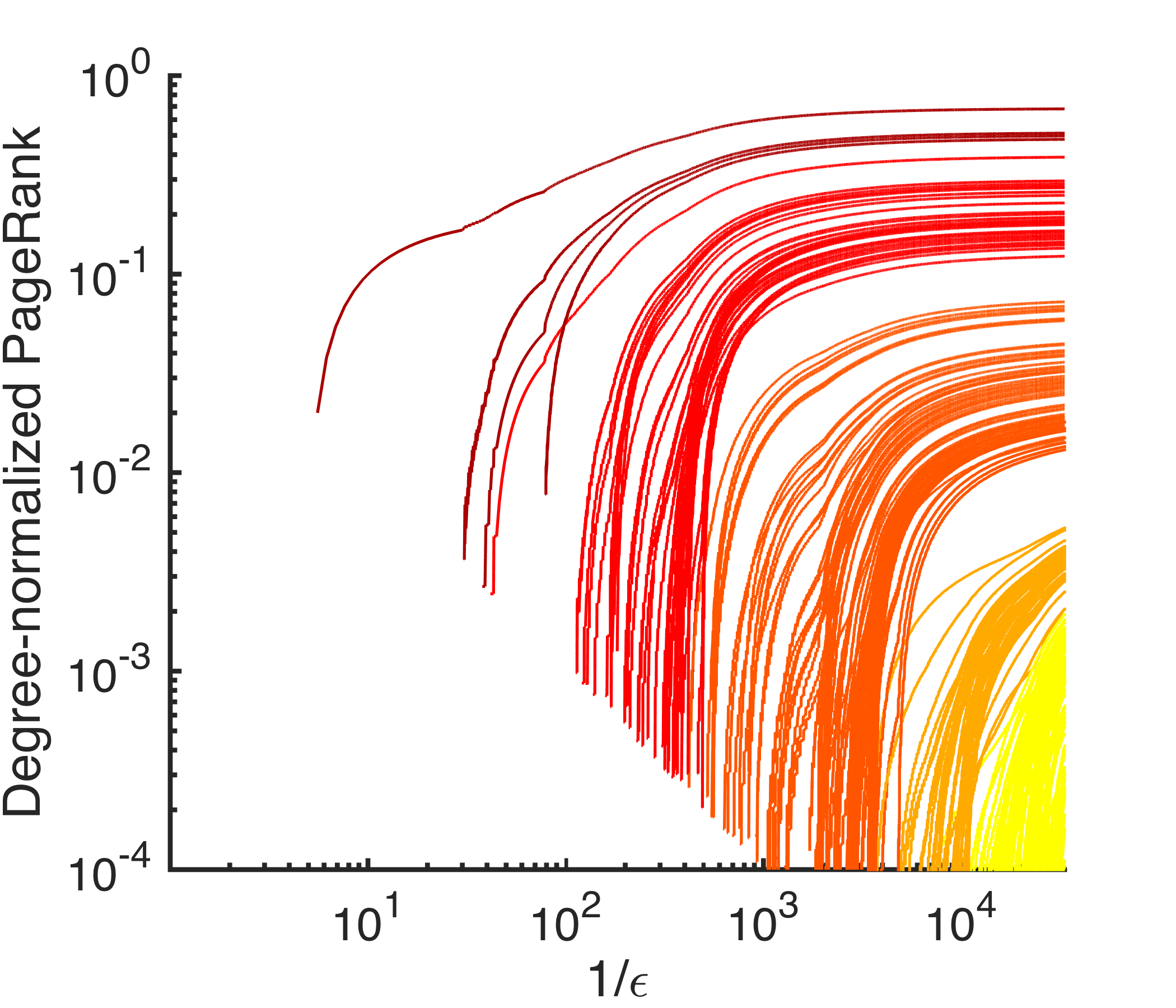}%
\caption{(Left) The solution paths for a PageRank diffusion on Newman's netscience dataset from a single seed node computed by exactly solving the regularized problem. (Right) The approximate solution paths computed by our push-based solution path algorithm with $\rho=0.9$. Each line traces a value $\vx_j$ as $\eps$ varies. The maximum infinity-norm distance between the two paths is $1.1\cdot10^{-4}$, showing that $\rho=0.9$ provides a good qualitative approximation.  Moreover, the two plots highlight identical qualitative features---for example, the large gaps between paths, and the strange bend in the paths near $\eps = 10^{-3}$. 
The coloring of the lines is based on the values at the smallest value of $\eps$. The values of $\eps$ used were generated by the approximate algorithm itself and we computed the exact solution at these same values for comparision.
}
\label{fig:netsci-exact}
\end{figure*}

\subsection{The Seeded PageRank Solution Path Plot}
We now wish to introduce a specific variation on the solution path plot that shows helpful contextual information. In the course of computation, our solution path algorithm identifies a small set of values of $\eps$ (somewhere between a few hundred to a few thousand) where it satisfies the solution criteria~\eqref{eqn:conv-crit}. At these values, we perform a sweep-cut procedure to identify the set of best conductance induced by the current solution. In the solution path plot, we display the cut-point identified by this procedure as a thick black line. All the nodes whose trajectories are above the dark black line at a particular value of $\eps$ are contained in the set of best conductance at that value of $\eps$. This line allows us to follow the trajectory of the minimum conductance set as we vary $\eps$.  Another property of our algorithm is that the smallest possible non-zero diffusion value in the solution is $(1-\rho)\eps$. Thus, we plot this as a thin, diagonal, black line that acts as a pseudo-origin for all of the node trajectories.  The vertical blue lines in the bottom left of the plot mark the values of $\eps$ where we detect a significant new set of best conductance. Representative conductance values are shown when there is room in the plot.

The solution path plot that corresponds to Figure~\ref{fig:netsci-exact} is shown in Figure~\ref{fig:solution-plot}. This plot illustrates all of the features we discussed in this section.

\begin{figure}[t]
 \includegraphics[width=0.65\linewidth]{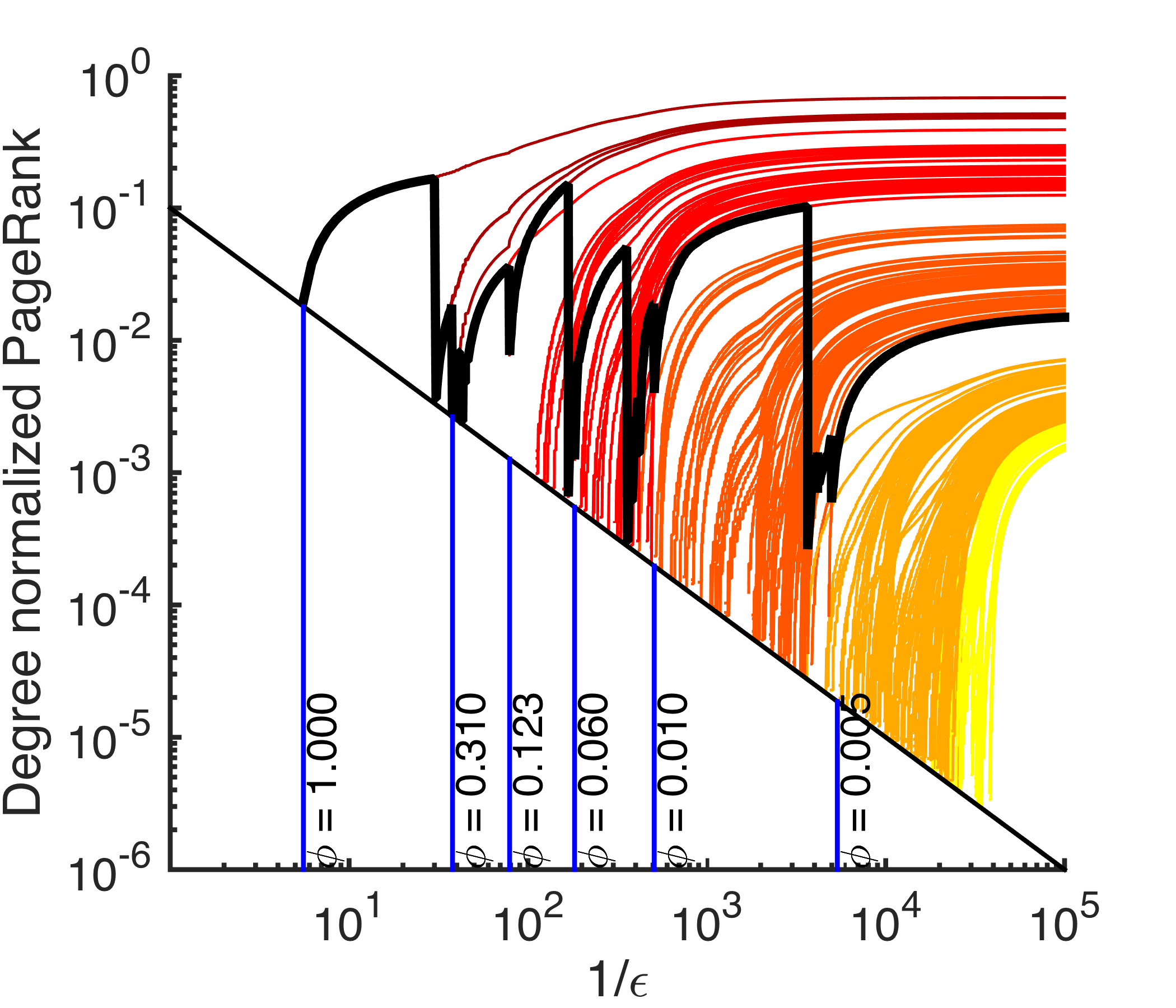}
 \caption{An example of the seeded PageRank solution path plot on Newman's netscience dataset. Each colored line represents the value of a single node as the diffusion progresses from large $\eps$ to small $\eps$. Because of our $\rho$-approximation to the true paths, the smallest value any node obtains is $(1-\rho) \eps$ and we plot this as a dark diagonal line. The thick black line traces out the boundary of the set of best conductance found at each distinct value of $\eps$ as determined by a sweep-cut procedure. The blue lines indicate significant changes to the set of minimum conductance, and they are labelled with the conductance value. The coloring of the trajectory lines is based on the values at the smallest value of $\eps$.  We discuss implications of the plot in Section~\ref{sec:netsci-youtube-paths}.}
 \label{fig:solution-plot}
 \label{fig:netsci-path}
\end{figure}

%
\subsection{Nested communities in netscience and Facebook}
\label{sec:netsci-youtube-paths}

We now discuss some of the insights that arise from the solution path plot. 
In Figure~\ref{fig:netsci-path}, we show the seeded PageRank solution path plot for around $21,000$ values of $\eps$ computed via our algorithm for the network science collaboration network. This computation runs in less than a second. Here, we see that large gaps in the degree normalized PageRank vector indicate cutoffs for sets of good conductance. This behavior is known to occur when sets of really good conductance emerge~\cite{Andersen-2007-sharp-drops}. We can now see how they evolve and how the procedure quickly jumps between them. In particular,  the path plots reveal multiple communities (good conductance sets) nested within one another through the gaps between the trajectories. 

On a crawl of a Facebook network from 2009 where edges between nodes correspond to observed interactions~\cite{Wilson-2009-social-networks} (see Table~\ref{tab:datasets}, \texttt{fb-one}, for the statistics), we are able to find a large, low conductance set using our solution path method. (Again, this takes about a second of computation.) Pictured in Figure~\ref{fig:fbA-path}, this diffusion shows no sharp drops in the PageRank values like in the network science data, yet we still find good conductance cuts. Note the few stray ``orange'' nodes in the sea of yellow. These nodes quickly grow in PageRank and break into the set of smallest conductance. Finding these nodes is likely to be important to understand the boundaries of communities in social networks; these trajectories could also indicate anomalous nodes. Furthermore, this example also shows evidence of multiple nested communities. These are illustrated with the manual annotations $A, B, C$. 

\begin{figure}[t]
\centering
\includegraphics[width=0.65\linewidth]{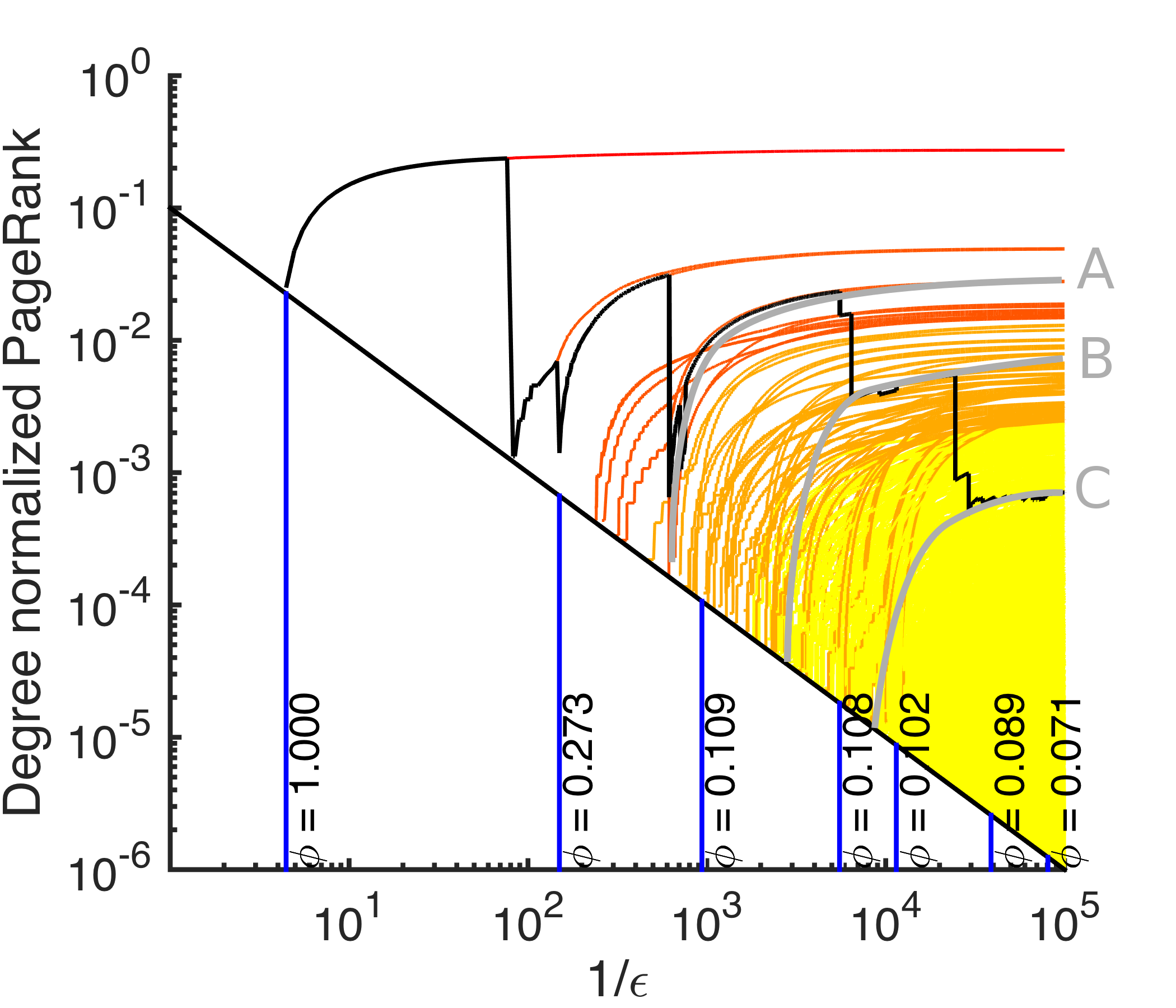}
 \caption{The seeded PageRank solution path for a crawl of observed Facebook network activity for one year (fb-one from Table~\ref{tab:datasets}) shows large, good cuts do not need to have large drops in the PageRank values. Nodes enter the solution and then quickly break into the best conductance set, showing that the frontier of the diffusion should be an interesting set in this graph. Furthermore, this path plot shows evidence of multiple nested communities ($A$, $B$, and $C$), which were manually annotated. The set $A$ is only a few nodes, but has a small conductance score of $0.11$; set $B$ grows and improves this to a conductance of $0.1$, and finally set $C$ achieves a conductance of $0.07$, which is an unusually small conductance value in a large social network.}
 \label{fig:fbA-path}
\end{figure}

\subsection{Core and periphery structure in the US Senate}
\label{sec:senate}

The authors in \cite{jeub2015locally} analyzed voting patterns across the first 110 US-Senates by comparing senators in particular terms. We form a graph from this US Senate data where each senator is represented by a single node. For each term of the senate, we connect senators in that session to their 3 nearest neighbors measured by voting similarities.  This graph has a substantial temporal structure as a senator from 100 years ago cannot have any direct links to a senator serving 10 years ago. We show how our solution paths display markedly different characteristics when seeded on a node near the core of the network compared with a node near the periphery. This example is especially interesting because both diffusions lead to on closely related cuts. 

Figure~\ref{fig:senate} displays solution paths seeded on a senator on the periphery of the network (top right) and a senator connected to the core of the network (top left). Here are some qualitative insights from the solution path plots. 
The peripheral seed is a senator who served a single term; the diffusion spreads across the graph slowly because the seed is poorly connected to the network outside the seed senator's own senate term.
As the diffusion spreads outside the seed's particular term, the paths identify 
multiple nested communities that essentially reflect previous and successive terms of the Senate. 
In contrast, the core node is a senator who served eight terms. The core node's paths skip over such smaller-scale community structures (i.e. individual senate terms) as the diffusion spreads to each of those terms nearly simultaneously. Instead, the paths of the core node identify only one good cut: the cut separating all of the seed's terms from the remainder of the network.

\begin{figure*}[htp!]
 \subfigure[Core seed]{\includegraphics[width=0.5\linewidth]{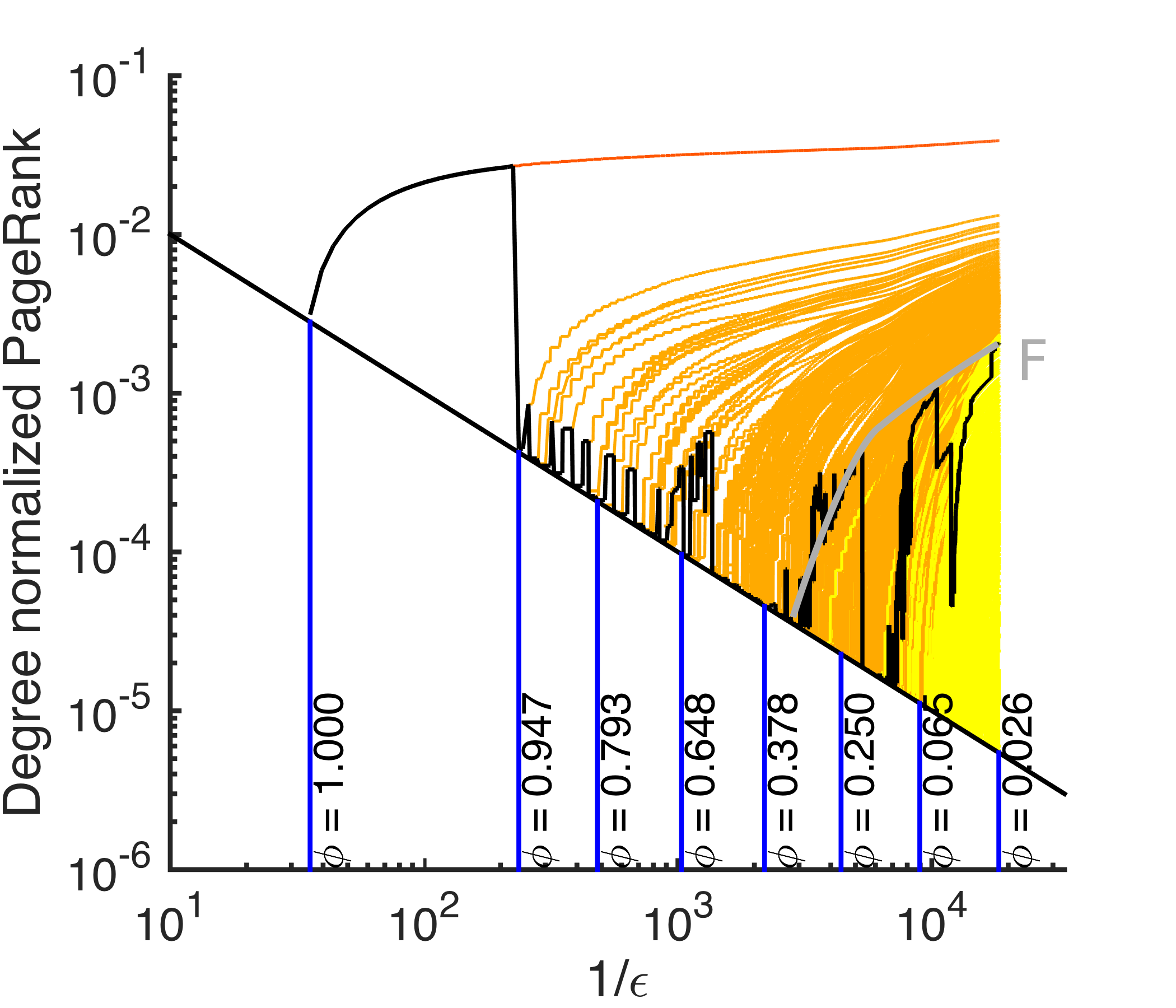}}%
 \subfigure[Periphery seed]{\includegraphics[width=0.5\linewidth]{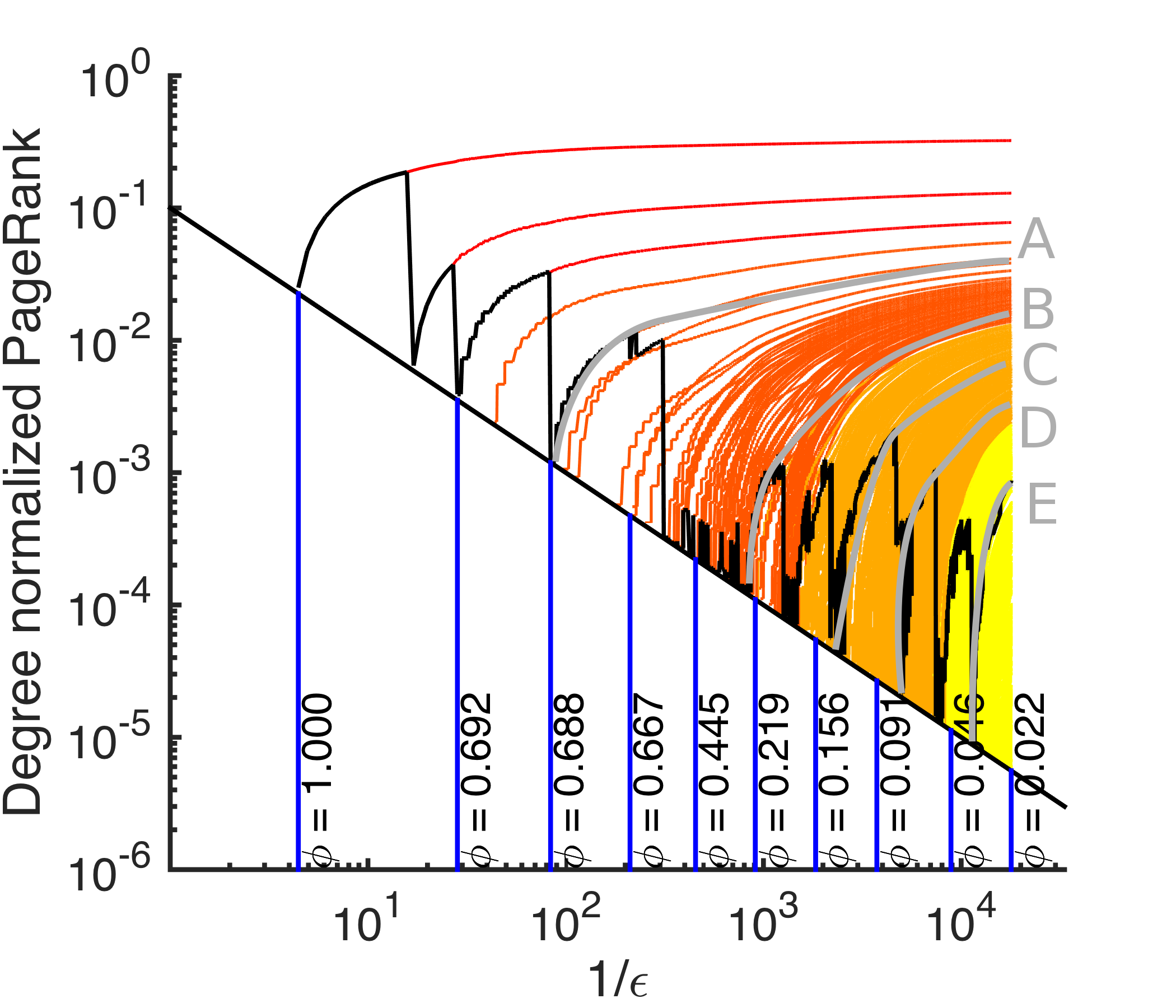}}%
\\
\subfigure[Core, $\eps=3\cdot10^{-4}$]{\includegraphics[width=0.3\linewidth]{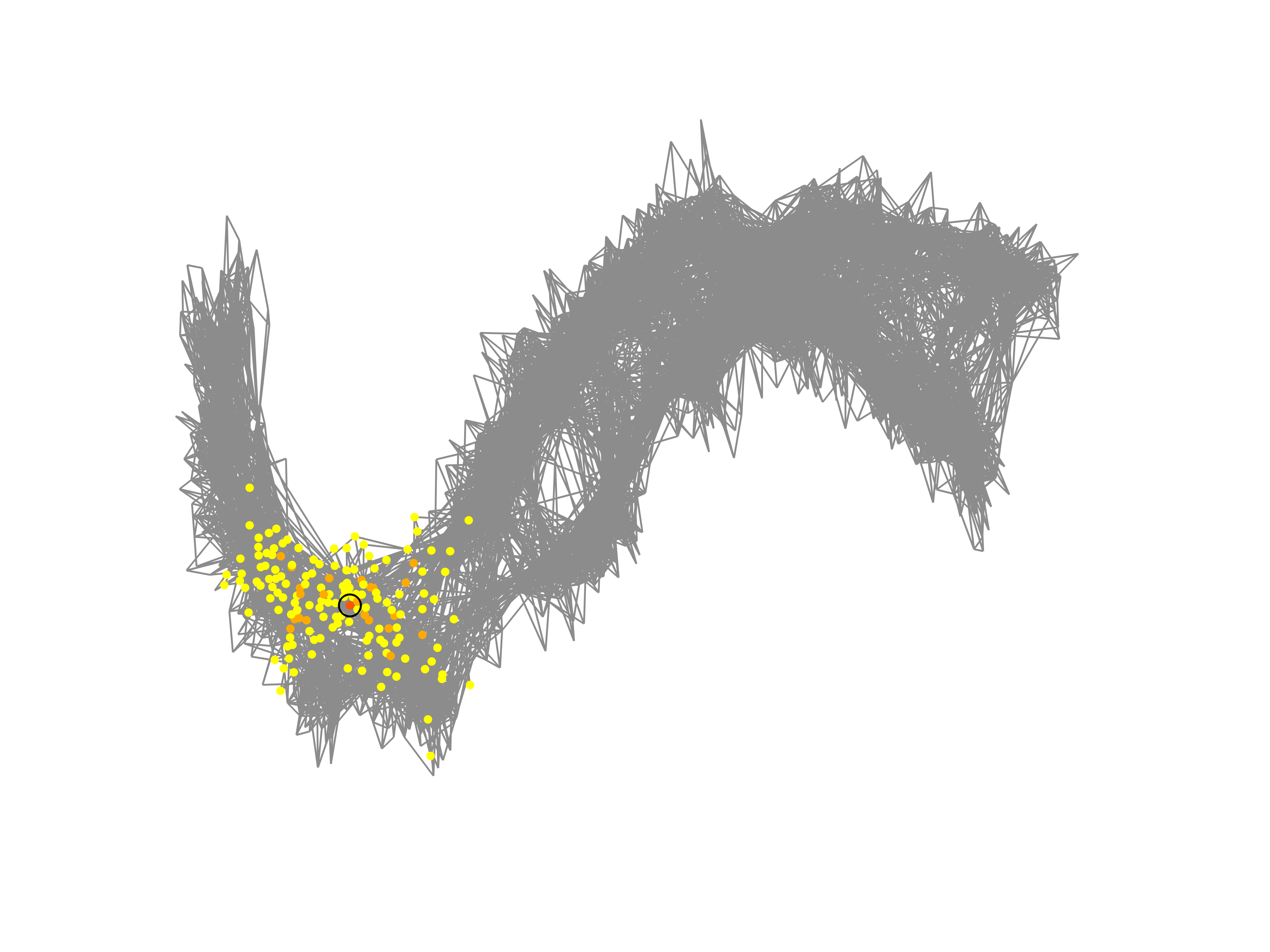}}%
\subfigure[Core, $\eps=10^{-4}$]{\includegraphics[width=0.3\linewidth]{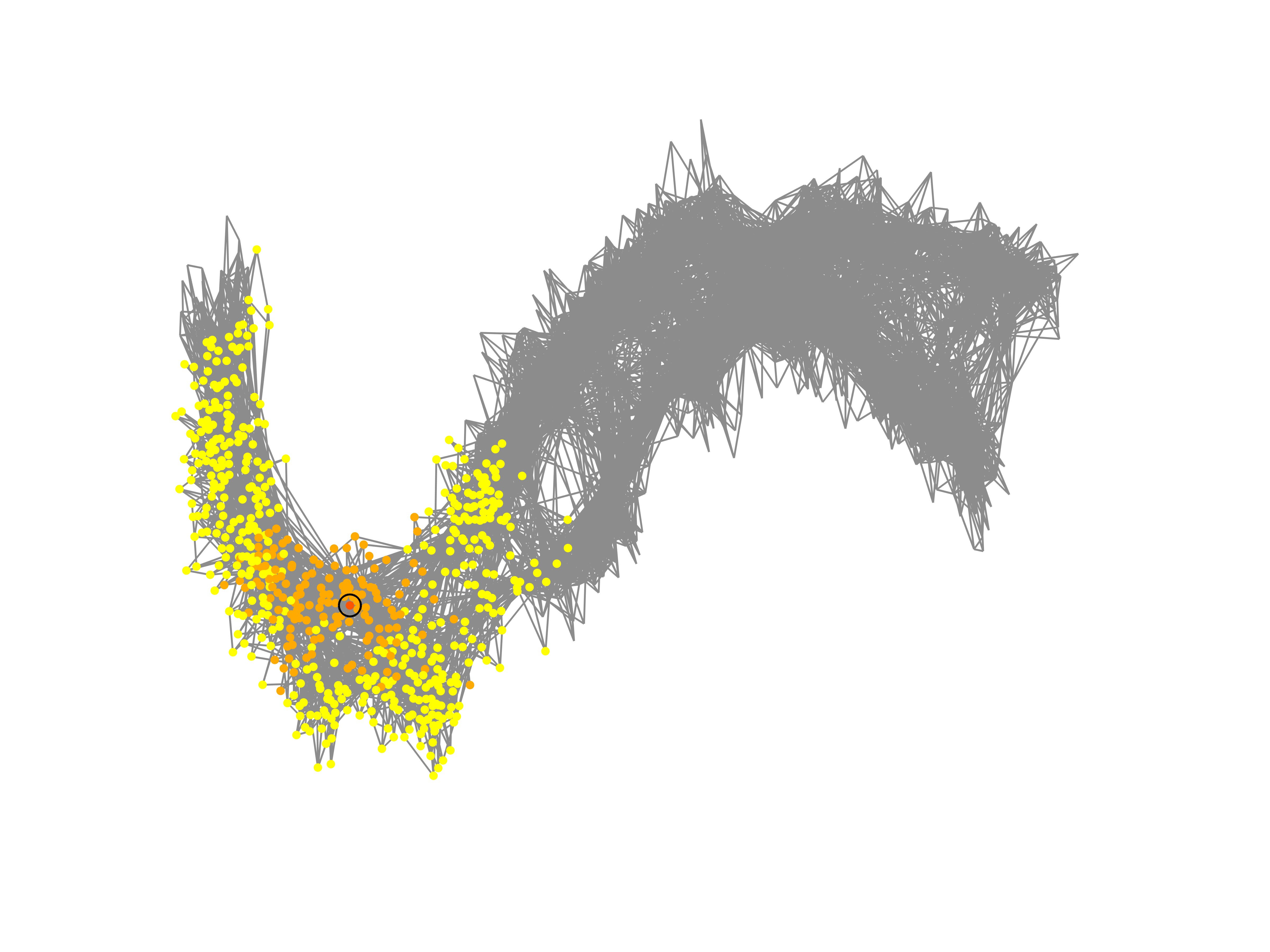}}%
\subfigure[Core, $\eps=3\cdot10^{-5}$]{\includegraphics[width=0.3\linewidth]{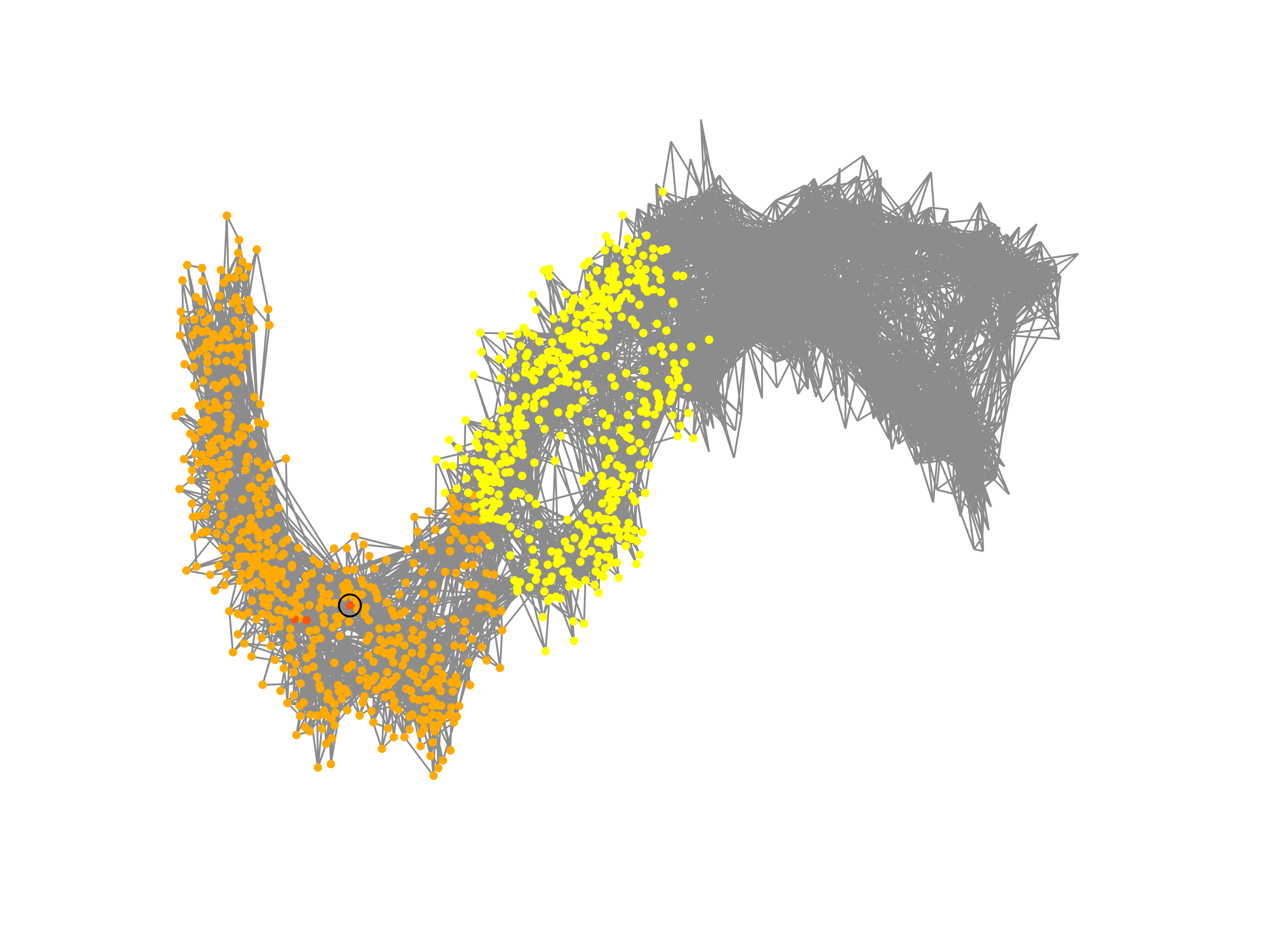}}%
\\
\subfigure[Periphery, $\eps=3\cdot10^{-4}$]{\includegraphics[width=0.3\linewidth]{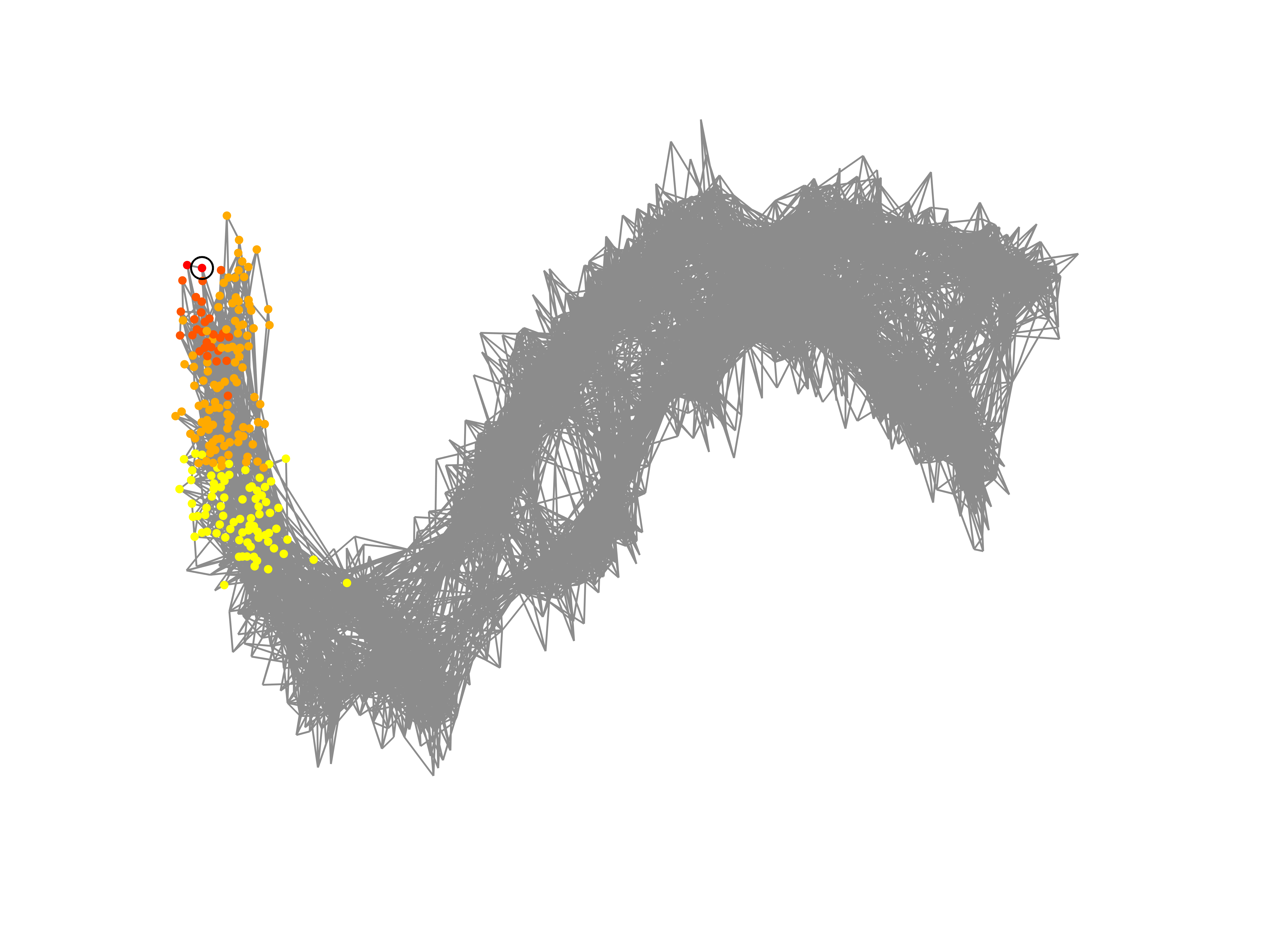}}%
\subfigure[Periphery, $\eps=10^{-4}$]{\includegraphics[width=0.3\linewidth]{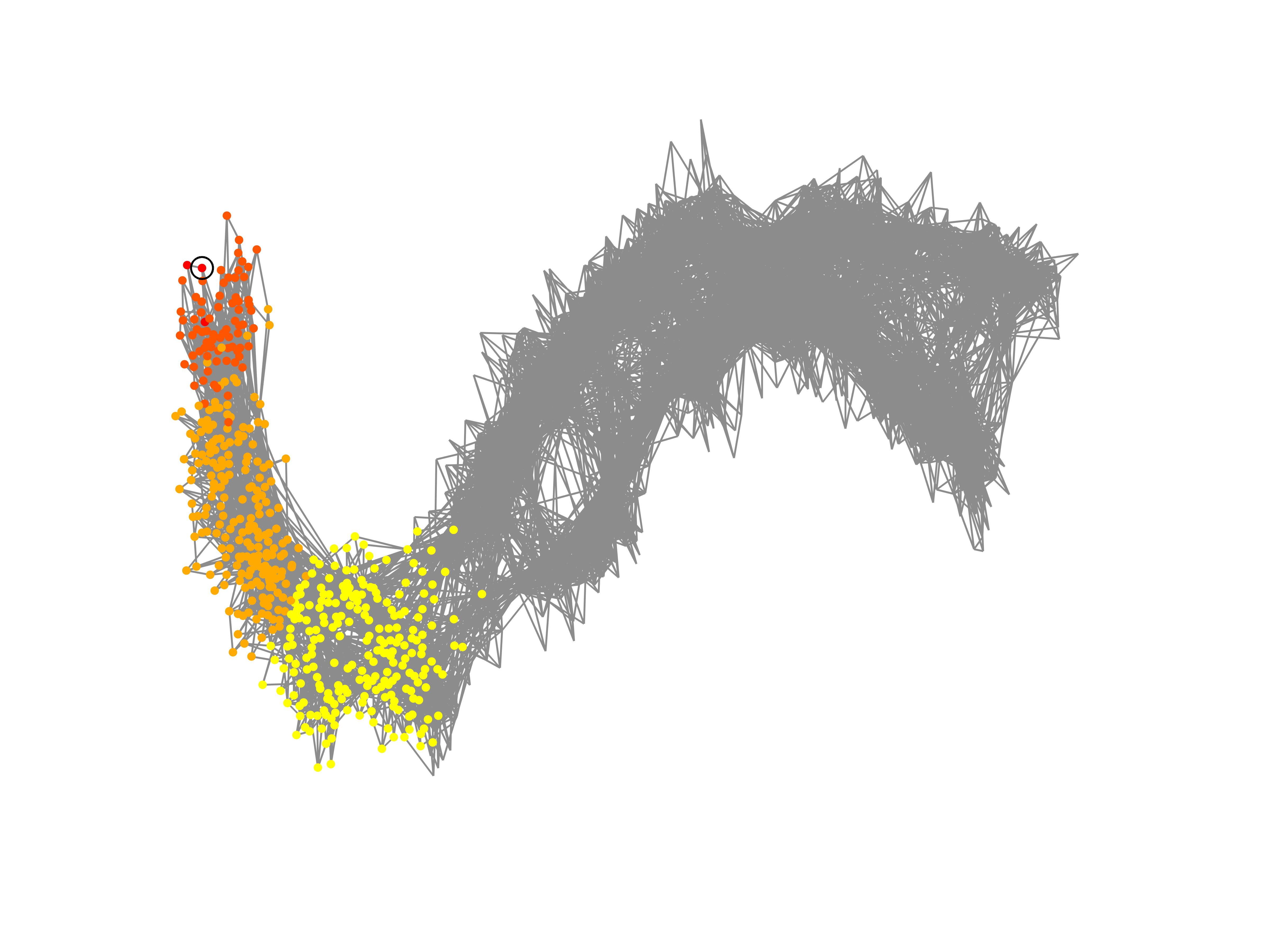}}%
\subfigure[Periphery, $\eps=3\cdot10^{-5}$]{\includegraphics[width=0.3\linewidth]{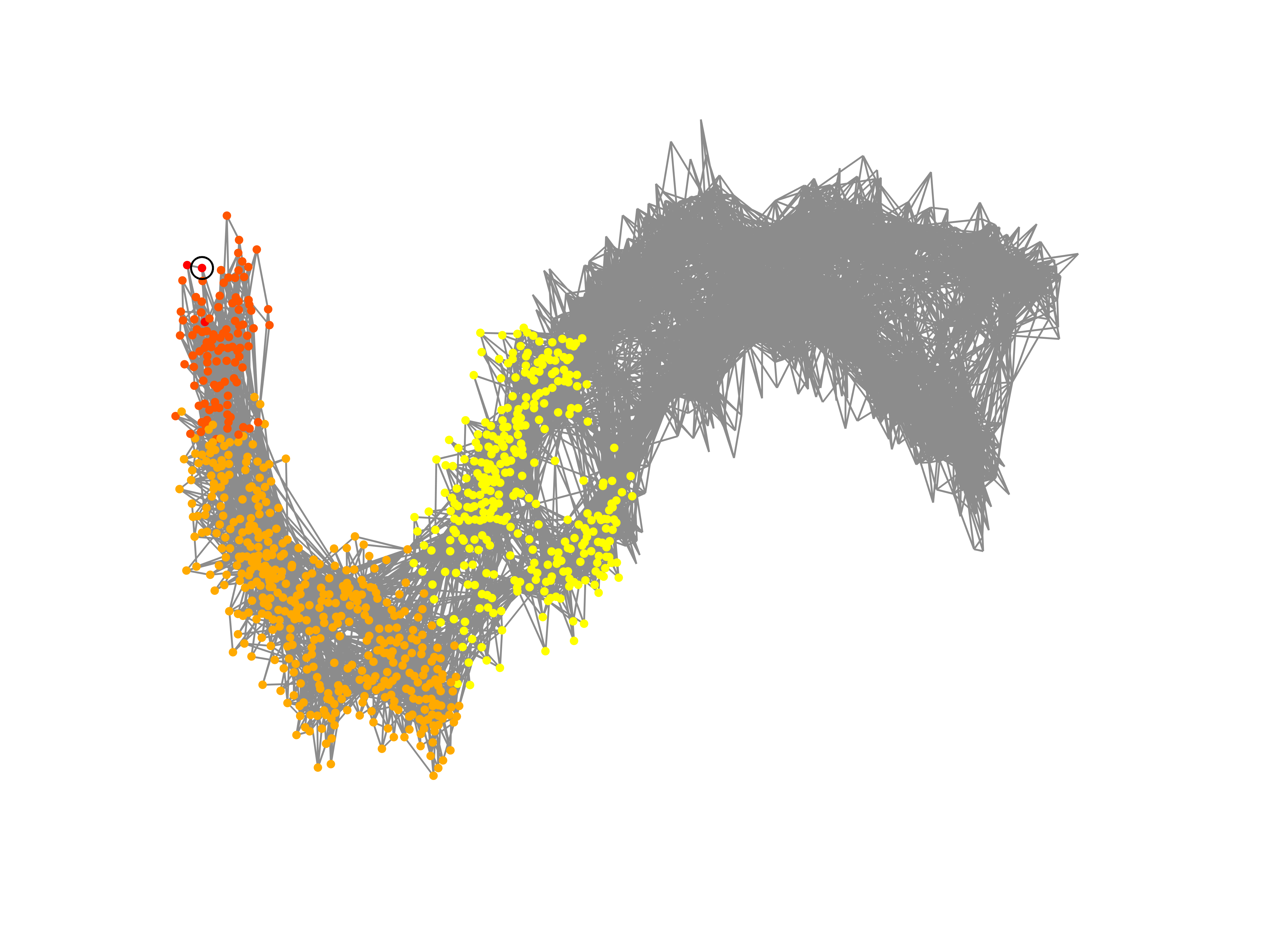}}%
\caption{\emph{(Top.)} The solution paths on the US-Senate graph for a senator in the core (who served multiple terms and is centrally located in a graph layout) and for a senator in the periphery (who served a single term and is located on the boundary of the graph layout). \emph{(Bottom.)} The diffusions for each of these senators are shown as heat-plots on the graph layout. Red indicates nodes with the largest values and yellow the smallest.
The seed nodes are circled in these layouts. 
 The solution paths for a peripheral node indicate multiple nested communities, visible in the images of the diffusion on the whole graph and marked $A$, $B$, $C, D, E$. These sets are strongly correlated with successive terms of the Senate. In contrast, the core node diffusion only indicates one good cut.  For the core node, we can see the
 diffusion essentially spreads across multiple dense regions simultaneously, without settling in one easily separated region until $\eps$ is small enough that the diffusion has spread to the entire left side of the graph. The sets $A$ and $F$ are also almost the same.
}
 \vspace*{-\baselineskip}
 \label{fig:senate-diff}
 \label{fig:senate}
\end{figure*}

This example demonstrates the paths' potential ability to shed light on a seed's relationship to the network's core and periphery, as well as the seed's relationship to many communities.

\subsection{Cluster boundaries in handwritten digit graphs}
\label{sec:usps}

Finally, we use the solution paths to study the behavior of a diffusion for a semi-supervised learning task.
The USPS hand-written digits dataset consists of roughly 10,000 images of the digits $0$ through $9$ in human hand-writing~\cite{zhou2003learning}. Each digit appears in roughly 1,000 of the images, and each image is labelled accordingly. From this data we construct a 3-nearest-neighbors graph, and carry out our analysis as follows. Pick one digit, and select 4 seed nodes uniformly at random from the set of nodes labelled with this digit. Then compute the PageRank solution paths from these seeds. Figure~\ref{fig:usps-path-plot} shows the path plots with labels (right) and without (left). In the labelled plot, the correct labels are red and the incorrect labels are green. 

We can use the best conductance set determined by the PPR vector to capture a number of other nodes sharing the seeds' label.
However, this straight-forward usage of a PageRank vector results in a number of false positives. Figure~\ref{fig:usps-path-plot} (right) shows that a number of nodes with incorrect labels are included in the set of best conductance (curves that are not colored red do not share the seed's label).
 
Looking at the solution-paths for this PageRank vector (Figure~\ref{fig:usps-path-plot}, left) we can see that a number of these false positives can be identified as the erratic lighter-orange paths cutting across the red paths. Furthermore, the solution paths display earlier sets of best conductance (left of the black spikes near $\eps = 10^{-3}$) that would cut out almost all false positives. This demonstrates that the solution paths can be used to identify ``stable'' sets of best conductance that are likely to yield higher precision labeling results. Consequently, these results hint that a smaller, but more precise, set lurks inside of the set of best conductance. This information would be valuable when determining additional labels or trying to study new data that is not as well characterized as the USPS digits dataset. 

\begin{figure*}[th]
\centering
\includegraphics[width=0.5\linewidth]{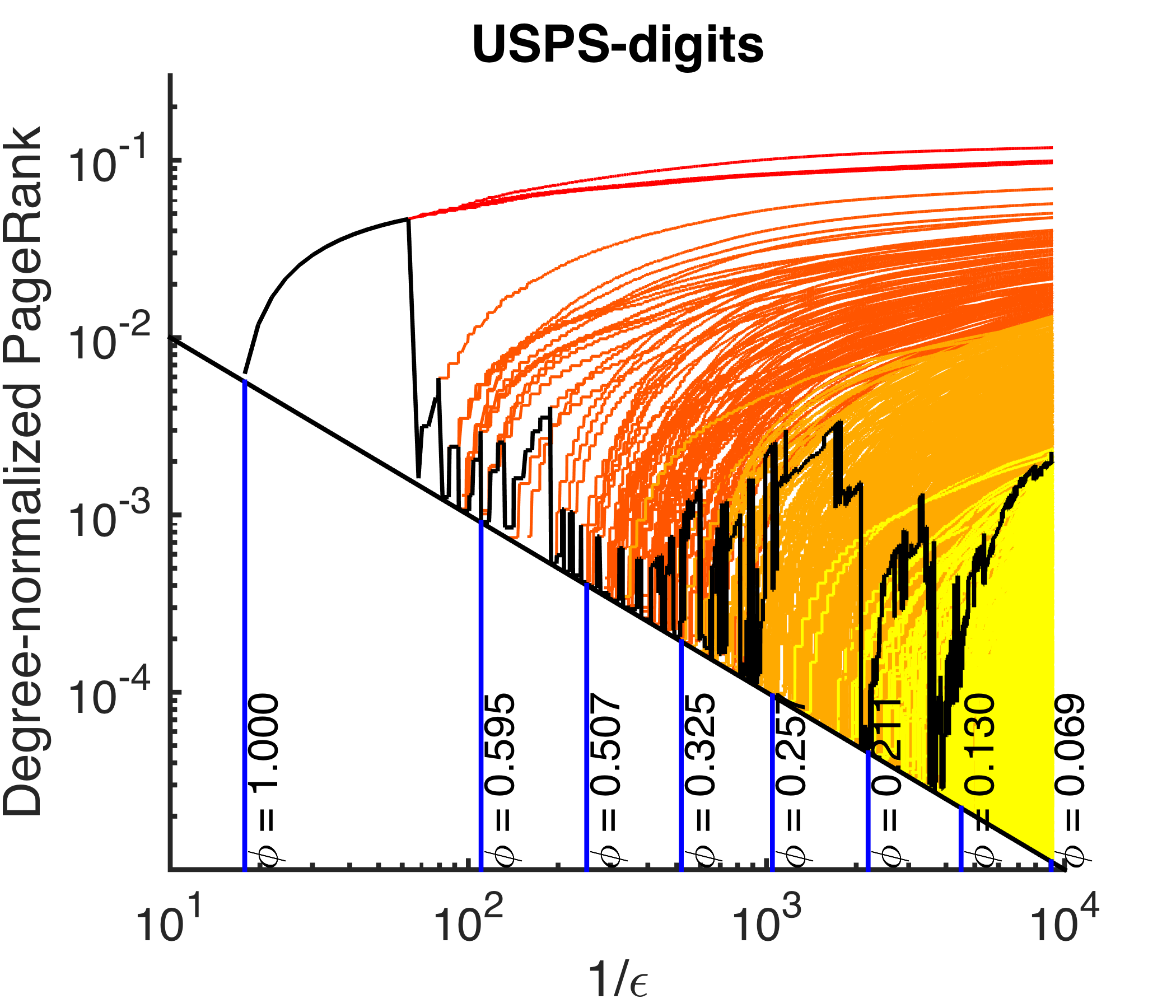}%
\includegraphics[width=0.5\linewidth]{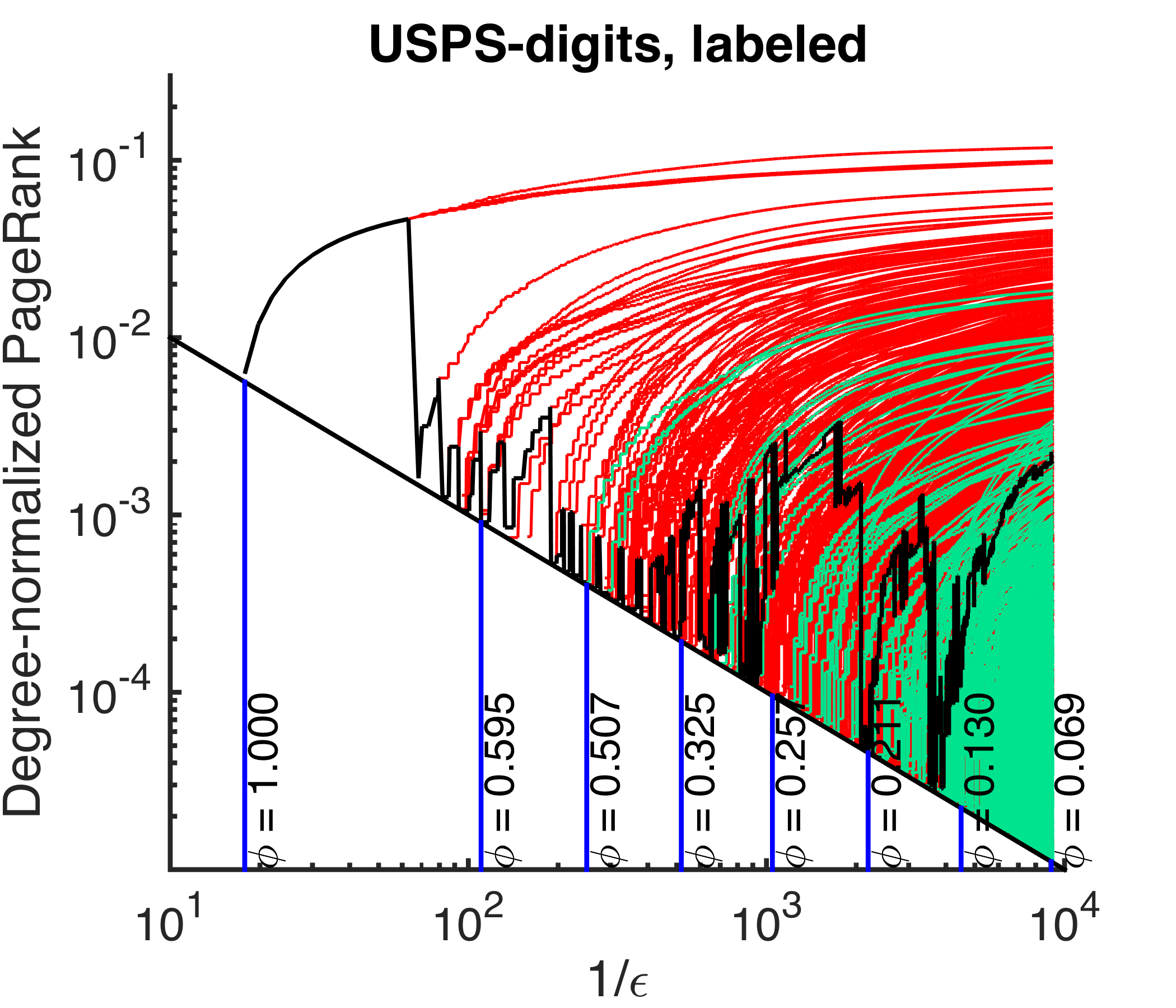}%
 \caption{Seeded PageRank solution path plots for diffusions in the USPS digit dataset. The seeds are chosen to be images of handwritten digits with the same label. \emph{(At left.)} The solution paths reveal a number of anomalous node trajectories near the set of best conductance. Nodes entering the set of best conductance after the black line erratically oscillates are most likely to be false positives near the boundary. 
 \emph{(At right.)} Here, we have colored the solution path lines based on the true-class label. Red shows a correct label and green shows an incorrect label. 
 }
 \label{fig:usps-path-plot} 
\end{figure*}

\subsection{Discussion}

Overall, these seeded PageRank solution path plots reveal information about the clusters and sets near the seeds. Some of the features we've seen include nested community structure and core-periphery structure. They all provide refined information about the boundary of a community containing the seed, and suggest nodes with seemingly anomalous connections to the seed.  For instance, some nodes enter the diffusion early but have only a slow-growing value indicating a weak connection to the seed; other nodes are delayed in entering the diffusion but quickly grow in magnitude and end up being significant members of the cluster. 
Each of these features offers refined insights over the standard single-shot diffusion computation.

\section{Algorithms}\label{sec:alg}

Here we present two novel algorithms for analyzing a PPR diffusion across a variety of accuracy parameter settings by computing the diffusion only a single time. Our first algorithm (Section~\ref{sec:path}) computes the best-conductance set from the $\rho$-approximate solution paths described in Section~\ref{sec:regularization}.
This effectively finds the best-conductance set from
PPR diffusions for \emph{every} accuracy satisfied in an interval $[\epsmn,\epsmx]$, where $\epsmn$ and $\epsmx$ are inputs.
We prove the total runtime is bounded by $O(\epsmn^{-2}(1-\alpha)^{-2}(1-\rho)^{-2})$, though we believe improvements can be made to this bound. In addition to identifying the best-conductance set taken from the different approximations, the algorithm enables us to study the solution paths of PageRank, i.e. how the PPR diffusion scores change as the diffusion's accuracy varies. Hence, we call this method \ppra.

We describe a second algorithm optimized for speed (Section~\ref{sec:grid}) in finding sets of low conductance, as the exhaustive nature of our first method generates too much intermediate data for stricter values of $\eps$. Instead of computing the full solution paths, the second method searches for good-conductance sets over an approximate solution for each accuracy parameter taken from a grid of parameter values. The spacing of the accuracy parameters values on the grid is an additional input parameter. For this reason, we call the algorithm \pprg. For a log-spaced grid of values $\eps_0 > \eps_1> \cdots > \eps_N$, we locate the best-conductance set taken from a sweep over each $\eps_k$-approximation. The work required to compute the diffusions is bounded by $O(\eps_N\inv(1-\alpha)\inv)$; we show this yields a constant factor speedup over the practice of computing each diffusion separately. However, our method requires the same amount of work for performing the sweeps over each different diffusion. 



We begin by describing a modification to the PageRank linear system that will simplify our notation and the exposition of our algorithm. 

\subsection{A modified PageRank linear system for the push procedure}\label{sec:mod-linsys}
Recall that the goal is to solve the PageRank linear system~\eqref{eqn:prls} to the accuracy condition~\eqref{eqn:conv-vec} and then sort by the elements $\vx_j/d_j$. 
If we multiply Equation \eqref{eqn:prls} by $\mD\inv$, then after some manipulation we obtain
\[
(\mI - \alpha \mP^T) \mD\inv\vx = (1-\alpha)\mD\inv\vv.
\]
Note this transformation relies on $\mA$ being symmetric so that $\mP^T = (\mA\mD\inv)^T = \mD\inv\mA = \mD\inv \mP \mD$.
To avoid writing $\mD\inv$ repeatedly, we make the change of variables $\vy = (1/(1-\alpha))\mD\inv\vx$ and $\vb = \mD\inv\vv$. The modified system is then
\begin{equation}\label{eqn:prsym}
(\mI - \alpha \mP^T) \vy = \vb
\end{equation}
and we set $\vvk{x}{k} = (1-\alpha)\mD\vvk{y}{k}$.

Next we use this connection between $\vx$ and $\vy$ enables us to establish a convergence criterion for our algorithms that will guarantee 
we obtain an approximation with the kind of accuracy typically desired for methods related to the push operation, e.g. \eqref{eqn:conv-vec}. More concretely,
to guarantee $\| \mD\inv(\vx - \hvx)\|_{\infty} < \tfrac{1-\alpha}{\eps}$, it suffices to guarantee $\|\vy - \hvy\|_{\infty} < \eps$, so it suffices for our purposes to bound
the error of the system \eqref{eqn:prsym}.




The accuracy requirement has two components: nonnegativity, and error. We relate the solution to its residual as the first step toward proving both of these.
 Left-multiplying the residual vector for \eqref{eqn:prsym} by $(\mI - \alpha \mP^T)\inv$ and substituting $\vy = (\mI - \alpha \mP^T)\inv \vb$, we get
\[
\vy-\vvk{y}{k} = \left(\sum_{m=0}^{\infty} \alpha^m \left(\mP^T\right)^m \right)\vvk{r}{k},
\]
where the right-hand side replaces $(\mI - \alpha \mP^T)\inv$ with its Neumann series.
Note here that, if the right-hand side consists of all nonnegative entries, then it is guaranteed that $\vy - \vvk{y}{k} \geq 0$ holds.
Recall from Section~\ref{sec:acl-push-proc} that the residual update involved in the push procedure consists of adding nonnegative components to the residual, and so the residual \emph{must} be nonnegative.
Then, since $(1-\alpha)\vvk{y}{k} = \mD\inv\vvk{x}{k}$, this implies $\vx \geq \vvk{x}{k}$, proving one component of the accuracy criteria \eqref{eqn:conv-vec} is satisfied.

Next we bound the error in $\vy$ in terms of its residual, and then control the residual's norm.
Using the triangle inequality and sub-multiplicativity of the infinity norm allows us to bound $\|\vy - \vvk{y}{k}\|_{\infty}$, which implies~\eqref{eqn:conv-vec}, with the following
\begin{align*}
\sum_{m=0}^{\infty} \alpha^m \left\| \left(\mP^T\right)^m \vvk{r}{k} \right\|_{\infty} &\leq \left(  \sum_{m=0}^{\infty} \alpha^m \left\| \mP^T \right\|_{\infty}^m \right) \left\| \vvk{r}{k}\right\|_{\infty}.
\end{align*}
Finally, since $\mP$ is column stochastic, $\mP^T$ is row-stochastic, and so $\|\mP^T\|_{\infty} = 1$. Substituting this and noting that $\sum_{m=0}^{\infty} \alpha^m = 1/(1-\alpha)$ allows us to bound
\[
\tfrac{1}{1-\alpha}\left\| \mD\inv\vx - \mD\inv\vvk{x}{k}\right\|_{\infty} = 
\left\|\vy - \vvk{y}{k}\right\|_{\infty} \leq \tfrac{1}{1-\alpha} \left\| \vvk{r}{k}\right\|_{\infty}.
\]
So to guarantee $\vx$ satisfies the desired accuracy, it is enough to guarantee that
\begin{equation}\label{eqn:conv-crit}
\left\| \vvk{r}{k}\right\|_{\infty} < \eps
\end{equation}
holds, where $\vvk{r}{k} = \vb - (\mI - \alpha \mP^T)\vvk{y}{k}$ and $\vvk{x}{k} = (1-\alpha)\mD\vvk{y}{k}$. Thus, for our algorithms to converge to the desired accuracy, it suffices to iterate until the residual norm satisfies the bound~\eqref{eqn:conv-crit}.
With this terminating condition established,
 we can now describe our algorithm for computing the solution paths of $\vx_{\eps}$ as $\eps$ varies.

\subsection{PageRank solution paths}\label{sec:path}
\label{sec:alg-path}
Recall that our goal is computing the solution paths of seeded PageRank with respect to the parameter $\eps$. That is, we want an approximation $\vx_{\eps}$ of PageRank for all $\eps$ values inside some region. Let $\mP$ be a stochastic matrix, choose $\alpha$ satisfying $0 < \alpha < 1$, let $\vv$ be a stochastic vector, and set $\vb = \mD\inv\vv$. Fix input parameters $\epsmn$ and $\epsmx$. Then for each value $\epscu \in [\epsmn, \epsmx]$ ($\epscu$ denotes ``the value of $\eps$ currently being considered"), we want an approximation $\hvy$ of the solution to $( \mI - \alpha \mP^T ) \vy = \vb$ that satisfies $\| \vy - \hvy\|_{\infty} < \tfrac{\epscu}{1-\alpha}$. (Or rather, we want a computable approximation to this information.) As discussed in Section~\ref{sec:regularization}, we also use the approximation parameter $\rho \in [0,1)$ in the push step.

Given initial solution $\vvk{y}{0} = \textbf{0}$ and residual $\vvk{r}{0} = \vb$, proceed as follows. Maintain a priority queue, $Q(\vr)$, of all entries of the residual that do not satisfy the convergence criterion $\vr_j < \epsmn$. We store the entries of $Q(\vr)$ using a max-heap so that we can quickly determine $\|\vr\|_{\infty}$ at every step.

Each time the value $\|\vr\|_{\infty}$ reaches a new minimum, we consider the resulting solution vector to satisfy a new ``current" accuracy, which we denote $\epscu$.
For each such $\epscu$ achieved, we want to perform a sweep over the solution vector.
Because the sweep operation requires a sorted solution vector, we keep $\vy$ in a sorted array, $L(\vy)$. By re-sorting the solution vector each time a single entry $\vy_j$ is updated, we avoid having to do a full sweep for each ``new" $\epscu$-approximation. The local sorting operation is a bubblesort on a single entry; the local sweep update we describe below.

With the residual and solution vector organized in this way, we can quickly perform each step of the above iterative update. Then, iterating until $\|\vr\|_{\infty} < \epsmn$ guarantees convergence to the desired accuracy. Next we present the iteration in full detail.

\paragraph{PPR path algorithm}
The \ppath algorithm performs the following iteration until the maximum entry in $Q(\vr)$ is below the smallest parameter desired, $\epsmn$.
\begin{enumerate}
\item[ 1. ] Pop the max of $Q(\vr)$, say entry $j$ with value $r$, then set $\vr_j=\rho\epscu$ and reheap $Q(\vr)$.
\item[ 2. ] Add $r-\rho\epscu$ to $\vy_j$.
\item[ 3. ] Bubblesort entry $\vy_j$ in $L(\vy)$.
\item[ 4. ] If $L(\vy)$ changes, perform a local sweep update.
\item[ 5. ] Add $(r-\rho\epscu)\alpha\mP^T\ve_j$ to $\vr$.
\item[ 6. ] For each entry $i$ of $\vr$ that was updated, if it does not satisfy $r_i < \epsmn$, then insert (or update) that entry in $Q(\vr)$ and re-heap.
\item[ 7. ] If $\|\vr\|_{\infty} < \epscu$, record the sweep information, then set $\epscu = \|\vr\|_{\infty}$.
\end{enumerate}

When the max-heap $Q(\vr)$ is empty, this signals that all entries of $\vr$ satisfy the convergence criterion $r_j < \epsmn$, and so our diffusion score approximations satisfy the accuracy requirement \eqref{eqn:conv-vec}.

\paragraph{Sweep update}
The standard sweep operation over a solution vector involves sorting the entire solution vector and iteratively computing the conductance of each consecutive sweep set. Here, we re-sort the solution vector after each update by making only the local changes necessary to move entry $\vy_j$ to the correct ranking in $L(\vy)$. This is accomplished by bubblesorting the updated entry $\vy_j$ up the rankings in $L(\vy)$. Note that if $\vvk{y}{k}$ has $T_k$ nonzero entries, then this step can take at most $T_k$ operations. We believe this loose upperbound can be improved.
We \emph{could} determine the new rank of node $\vy_j$ in work $\log T_k$ via a binary insert. However, since we must update the rank and sweep information of each node that node $\vy_j$ surpasses, the asymptotic complexity would not change.

Once the node ranks have been corrected, the conductance score update proceeds as follows. Denote by $S^{(k-1)}(m)$ the set of nodes that have rankings $1, 2, \cdots, m$ during step $k-1$. Assuming we have the cut-set (cut and volume) information for each of these sets, then we can update that information for the sets $S^{(k)}(m)$ as follows.

Suppose the node that changed rankings was promoted from rank $j$ to rank $j-\Delta_k$.
Observe that the sets $S^{(k)}(m)$ and their cut-set information remain the same for any set $S^{(k)}(m)$ lying inside the rankings $[1, \cdots, j-\Delta_k -1]$, because the change in rankings happened entirely in the interval $[ j-\Delta_k, \cdots, j]$. This occurs for $m < j-\Delta_k$. Similarly, any set $S^{(k)}(m)$ with $m > j$ would already contain all of the nodes whose rank changed -- altering the ordering within the set does not alter the conductance of that set, and so this cut-set information also need not be changed. Hence, we need to update the cut-set information for only the intermediate sets.

Now we update the cut-set information for those intermediate sets. We refer to the node that changed rank as node $L(j)$. Its old rank was $j$, and its new rank is $j-\Delta_k$. Note that the cut-set information for the set $S^{(k)}(j-t)$ (for $t = 0, \cdots, \Delta_k$) is the exact same as that of set $ S^{(k-1)}(j-t-1)\cup \{L(j)\}$. In words, we introduce the node $L(j)$ to the set $S^{(k-1)}(j-t-1)$ from the previous iteration, and then compute the cut-set information for the new iteration's set, $S^{(k)}(j-t)$, by looking at just the neighborhood of node $L(j)$ a single time. This provides a great savings over simply reperforming the sweep procedure over the entire solution vector up to the index where the rankings changed.

If the node being operated on, $L(j)$, has degree $d$, then this process requires work $O(d + \Delta_k)$. As discussed above, we can upperbound $\Delta_k$ with the total number of iterations the algorithm performs $T_k$.

\begin{theorem}\label{thm:ppra}
Given a random walk transition matrix $\mP = \mA\mD\inv$, stochastic vector $\vv$, and input parameters $\alpha \in (0,1)$, $\rho \in [0,1)$, and $\epsmx > \epsmn >0$, our \ppra algorithm outputs the best-conductance set found from sweeps over $\epscu$-accurate degree-normalized, $\rho$-approximate solution vectors $\hvx$ to $(\mI - \alpha \mP)\vx = (1-\alpha)\vv$, for all values $\epscu \in [\epsmn,\epsmx]$. The total work required is bounded by $O\bigl(\tfrac{1}{\epsmn^2(1-\alpha)^2(1-\rho)^2}\bigr)$.
\end{theorem}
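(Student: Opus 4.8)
The plan is to split the statement into a correctness claim and a work bound, and to devote most effort to the latter. Correctness---that the algorithm reports the best-conductance set obtained by sweeping over each $\epscu$-accurate, $\rho$-approximate solution---follows from facts already established. First, the accuracy analysis of Section~\ref{sec:mod-linsys} shows that whenever $\|\vvk{r}{k}\|_{\infty} < \epscu$ the vector $\vvk{x}{k} = (1-\alpha)\mD\vvk{y}{k}$ meets the accuracy condition~\eqref{eqn:conv-vec} at level $\epscu$, and the residual stays nonnegative, so $\vx \geq \vvk{x}{k}$. Second, the local sweep update maintains exactly the same cut/volume data as a full sweep, so each recorded conductance is genuinely the best-conductance value for that $\epscu$. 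I would dispatch these as bookkeeping and then concentrate on the runtime.

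For the work bound the key move is to choose the right potential. The plain residual norm $\|\vr\|_1$ does not decrease monotonically, because pushing along column $j$ of $\mP^T = \mD\inv\mA$ scales the mass sent to neighbor $i$ by $1/d_i$, which can inflate the $1$-norm at low-degree neighbors. Instead I would track the degree-weighted quantity $\Phi := \|\mD\vr\|_1 = \ve^T\mD\vr$ (equal to $\sum_i d_i \vr_i$ since $\vr \geq 0$). Its initial value is $\Phi_0 = \ve^T\mD\vb = \ve^T\vv = 1$, because $\vb = \mD\inv\vv$ and $\vv$ is stochastic, and it stays nonnegative throughout. At a step that pops entry $j$ with value $r$ and resets it to $\rho\epscu$, the residual changes by $(r-\rho\epscu)(-\ve_j + \alpha\mP^T\ve_j)$; multiplying by $\mD$ and contracting with $\ve$ gives $\ve^T\mD(-\ve_j + \alpha\mP^T\ve_j) = -d_j + \alpha\,\ve^T\mA\ve_j = -(1-\alpha)d_j$, using $\mD\mP^T\ve_j = \mA\ve_j$ and $\ve^T\mA\ve_j = d_j$. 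Hence each step decreases $\Phi$ by exactly $(1-\alpha)d_j(r-\rho\epscu)$.

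Next I would lower-bound this decrease. Since $\epscu$ is the running minimum of $\|\vr\|_{\infty}$, at the moment of popping we have $r = \|\vr\|_{\infty} \geq \epscu \geq \epsmn$, so $r-\rho\epscu \geq (1-\rho)\epscu \geq (1-\rho)\epsmn$; with $d_j \geq 1$, each decrease is at least $(1-\alpha)(1-\rho)\epsmn$. As $\Phi$ falls from $1$ to a nonnegative limit, the number of push steps obeys $T \leq \bigl[(1-\alpha)(1-\rho)\epsmn\bigr]\inv$. Keeping the telescoping in the form $\sum_k (1-\alpha)d_{j_k}(r_k-\rho\epscu_k) \leq 1$ also bounds the total degree (push) work $\sum_k d_{j_k}$ by the same quantity. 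Finally I would charge the sorting and sweep maintenance: each iteration bubblesorts one entry and runs a local sweep update costing $O(d_{j_k} + \Delta_k)$, and since both $\Delta_k$ and the bubblesort length are bounded by the nonzero count $T_k \leq T$, each iteration costs $O(d_{j_k} + T)$. Summing yields $O\bigl(\sum_k d_{j_k}\bigr) + O(T^2) = O(T^2)$, and substituting the bound on $T$ gives $O\bigl(\epsmn^{-2}(1-\alpha)^{-2}(1-\rho)^{-2}\bigr)$.

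The main obstacle is the first move---recognizing that $\|\mD\vr\|_1$, rather than $\|\vr\|_1$, is the potential with a clean per-step decrease---since the degree weighting is precisely what cancels the $\mD\inv$ hidden inside $\mP^T$. Everything afterward is telescoping together with the crude $O(T^2)$ accounting for the sweep maintenance, which (as the surrounding text notes) is probably loose but already suffices for the stated bound.
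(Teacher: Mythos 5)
Your proposal is correct and takes essentially the same route as the paper: the potential $\Phi = \ve^T\mD\vr$ you track is exactly the quantity the paper controls through its identity $\ve^T\vvk{x}{k} = 1 - \ve^T\mD\vvk{r}{k}$ (obtained by left-multiplying the system by $(\mD\ve)^T$ and invoking residual nonnegativity), so your per-step telescoping decrement $(1-\alpha)d_j(r-\rho\epscu)$ is a local restatement of the paper's conservation argument, and your crude $O(T^2)$ charge for the bubblesort/sweep maintenance mirrors the paper's accounting. The one piece you gloss over---the $O(d_j\log|Q|)$ heap re-heaping cost per step---is harmless, since summed over all steps it is dominated by the $O(T^2)$ term, just as in the paper.
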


\begin{proof}
We carry out the proof in two stages. First, we show that the basic iterative update converges in work $O(\epsmn\inv(1-\alpha)\inv (1-\rho)\inv)$. Then, we show that the additional work of sorting the solution vector and sweeping is bounded by $O(\epsmn^{-2}(1-\alpha)^{-2}(1-\rho)^{-2} )$.

\textbf{Push work.} We count the work on just the residual $\vvk{r}{k}$ and solution vector $\vvk{y}{k}$. The work required to maintain the heap $Q$ and sorted array $L$ is accounted for below.

Each step, the push operation acts on a single entry in the residual that satisfies $r_j \geq \epsmn $.
The step consists of a constant number of operations to update the residual and solution vectors (namely, updating a single entry in each).
The actual amount that is removed from the residual node is $(r_j-\rho\epsmn)$; then we add $(r_j-\rho\epsmn)$ to the appropriate entry of the solution, and $(r_j-\rho\epsmn) \alpha/d_j$ to $\vvk{r}{k}_i$ for each neighbor $i$ of node $j$.
 Since $j$ has $d_j$ such neighbors, the total work in one step is bounded by $O\left(d_j\right)$.
If $T$ steps of the push operation are performed, then the amount of work required to obtain an accuracy of $\epsmn$ is bounded by $ \sum_{t=0}^T d_{j} $, where $j = j(t)$ is the index of the residual operated on in step $t$, $\vvk{r}{t}_j$.

Next we bound this expression for the work done in these ``push" steps.
Since all entries of the solution and residual vectors are nonnegative at all times, the sum of the values $(r_t-\rho\epsmn)$ pushed at each step exactly equals the sum of the values $\vvk{y}{k}$, i.e. $\sum_{t=0}^T (r_t-\rho\epsmn) = \ve^T\vvk{y}{k}$.
Since $\vvk{y}{k} = (1/(1-\alpha))\mD\inv\vvk{x}{k}$, we then have that the sum of entries in $(1/(1-\alpha))\vvk{x}{k}$ equals the sum of values pushed from the residual scaled by degree and $(1-\alpha)$, i.e.
$
\ve^T\vvk{x}{k} = (1-\alpha)\sum_{t=0}^T (r_t-\rho\epsmn)\cdot d_{j(t)},
$
 where $j(t)$ is the node pushed in step $t$. We claim that the sum $\ve^T\vvk{x}{k} \leq 1$. Assuming this for the moment, we get from the previous equation that
$
(1-\alpha)\sum_{t=0}^T (r_t-\rho\epsmn)\cdot d_{j(t)} =  \ve^T\vvk{x}{k} \leq 1.
$
Since each step of \ppath operates on a residual value satisfying $r_t \geq \epsmn$, we know that $(r_t-\rho\epsmn)\geq \epsmn(1-\rho)$, and so 
\[
(1-\alpha)\sum_{t=0}^T \epsmn(1-\rho) \cdot d_{j(t)} < (1-\alpha)\sum_{t=0}^T r_t\cdot d_{j(t)} \leq 1.
\]
Dividing by $\epsmn(1-\alpha)(1-\rho)$ completes the proof that the expression for work, $\sum_{t=0}^T d_{j(t)}$, is bounded by $O\left(\epsmn\inv(1-\alpha)\inv (1-\rho)\inv \right)$.

Lastly, we justify the claim $\ve^T\vvk{x}{k}\leq 1$.  Left-multiplying the equations in \eqref{eqn:prsym} by $(\mD\ve)^T$ and using stochasticity of $\vv$ gives
\begin{align}
\ve^T (\mI - \alpha \mP)\mD\vvk{y}{k} &= \ve^T\mD\vb - \ve^T\mD\vvk{r}{k} \nonumber\\
(1-\alpha)\ve^T\tfrac{1}{(1-\alpha)}\vvk{x}{k} &= \ve^T\vv - \ve^T\mD\vvk{r}{k} \nonumber \\
\ve^T\vvk{x}{k} &= 1 - \ve^T\mD\vvk{r}{k}.  \label{eqn:etx}
\end{align}
As noted above, all entries of the residual and iterative solution vector are nonnegative at all times. The sum $\ve^T\vvk{x}{k}$ cannot exceed 1, then, because that would imply that the residual summed to a negative number, contradicting nonnegativity of the residual vector. Hence,  $\ve^T\vvk{x}{k} \leq 1$.

\textbf{Sorting and sweeping work.}
Here we account for the work performed each step in maintaining the residual heap $Q(\vr)$,  re-sorting the solution vector $L(\vy)$, and updating the sweep information for $L(\vy)$. 
To ease the process, we first fix some notation: denote the number of entries in the residual heap $Q(\vr)$ by $|Q|$, and the number of non-zero entries in the sorted solution vector $L(\vy)$ by $|L|$. We will bound both of these quantities later on. 
 We continue to use $\Delta_t$ to denote the number of rank positions changed in $L(\vy)$ in step $t$. Finally, recall that $T$ denotes the number of iterations of the algorithm required to terminate.

The work bounds we will prove, listed in the order in which the \ppra algorithm performs them, are as follows:

\begin{tabularx}{0.9\linewidth}{XXX} \toprule
Operation & actual work & upperbound \\
\midrule
Find $\max(\vr)$                 & $1$         & $1$ \\
Delete $\max(\vr)$               & $\log(|Q|)$ & $\log( \tfrac{1}{\epsmn(1-\alpha)(1-\rho)} )$ \\
Bubblesort $L(\vy_j)$   & $\Delta_t$  & $T$ \\
Re-sweep  $L(\vy)$  & $d_j + \Delta_t$ & $d_j + T$ \\
Update $\vr + r\alpha \mP^T\ve_j $ & $d_j$      & $d_j$ \\
Re-heap  $Q(\vr)$ & $d_j \log(|Q|)$ & $d_j \log( \tfrac{1}{\epsmn(1-\alpha)(1-\rho)} )$ \\
\bottomrule
\end{tabularx}\\

The residual heap operations for deleting max $Q(\vr)$ and re-heaping the updated entries each require $O(\log(|Q|))$ work, where $|Q|$ is the size of the heap, i.e. the number of nonzero entries in the residual. We can upperbound this number using the total number of pushes performed (since a nonzero in the residual can exist only via a push operation placing it there). We bound $|Q|$ by $O(\epsmn\inv (1-\alpha)\inv (1-\rho)\inv)$, then. We remark that this is quite loose, as values of $\rho$ near 1 actually force the solution and residual to be \emph{sparser}, so the heap size should still be bounded by $O(\epsmn\inv (1-\alpha)\inv)$, though we do not yet have a proof of this.

Re-sorting the solution vector via a bubblesort can involve no more operations than the length of the solution vector. Since a nonzero in entry $\vy_j$ can exist only if a step of the algorithm operates on an entry $\vr_j$, the number of nonzeros in $\vy$ is bounded by the number of steps of the algorithm, i.e. $|L| \leq T$. We believe this bound to be loose, but cannot currently tighten it.
Note that the work required in updating sweep information also requires $\Delta_t$ work, which we again upperbound by $T$. The $d_j$ term in updating sweep information is from accessing the neighbors of the entry $\vy_j$, the node changing its rank.

The dominant terms in the above expression for work are the re-heap updates and the bubblesort and re-sweep operations, which require a total of $O(d_j \log(|Q|) + |L|)$ work each step.
Summing this over all $T$ steps of the algorithm, we can majorize work by $O(\log(|Q|)\cdot\sum_{t=0}^T d_j ) + O(\sum_{t=0}^T |L|)$, which is upperbounded by 
$O\left( \tfrac{1}{\epsmn(1-\alpha)(1-\rho)}\log(|Q|) + T\cdot |L| \right).$
Finally, substituting in our loose upperbounds for $T$, $|Q|$, and $|L|$ mentioned above completes the proof:
\[
O\left( \tfrac{1}{\epsmn(1-\alpha)(1-\rho)}\log(\tfrac{1}{\epsmn(1-\alpha)(1-\rho)}) + \tfrac{1}{\epsmn^2(1-\alpha)^2(1-\rho)^2} \right)
 \leq O\left( \tfrac{1}{\epsmn^2(1-\alpha)^2(1-\rho)^2} \right).
 \]
 
 \end{proof}

\subsection{Fast multi-parameter PPR}\label{sec:grid}


Here we present a fast framework for computing $\eps$-approximations of a push-based PPR diffusion without computing a new diffusion for each $\eps$. This enables us to identify the optimal output that would result from multiple diffusion computations for different $\eps$ values, but without having to do the work of computing a new diffusion for each different  $\eps$. This algorithmic framework does not admit the parameter $\rho$ as easily, because of implementation details surrounding the data structures used to handle sorting and updating the residual.

The framework is compatible with every set of parameter choices for $\eps$ that allows for constant-time bin look-ups. More precisely, the set of parameters $\eps_0$, $\eps_1$, $\dots$, $\eps_N$ must have an efficient method for determining the index $k$ such that, given a value $r$, we have $\eps_{k-1} > r \geq \eps_k$. We focus on a set of $\eps$ values that are taken from a log-spaced grid: that is, the parameters are of the form $\eps_k = \eps_0 \theta^k$ for constants $0 < \eps_0,\theta < 1$. Because we assume our $\eps$ parameters are taken from such a grid, we call our method \pprg.
Another possibly useful case is choosing $\eps_k$ values taken from a grid formed from Chebyshev-like nodes, allowing for constant-time shelf-placement via $\cos\inv$ evaluations.

We emphasize that the underlying algorithm we use to compute the PageRank diffusion is closely related to the push method discussed in Section~\ref{sec:alg-push} as implemented by \cite{andersen2006-local}; in the case that only a single accuracy parameter is used, the algorithms are identical. When more than one accuracy setting is used, we employ a special data structure, which we call a shelf.

\paragraph{The shelf structure}
The main difference between our algorithm \pgrid and previous implementations of the push method lies in our data structure replacing the priority queue, $Q$, discussed in \ppath . Instead of inserting residual entries in a heap as in \ppath, we organize them in a system of arrays. Each array holds entries between consecutive values of $\eps_k$, so that each array holds entries larger than the shelf below it. For this reason, we call this system of arrays a ``max-shelf", $H$, and refer to each individual array as a ``shelf", $H_k$.

The process is effectively a bucket sort: each shelf (or bucket) of $H$ holds entries of the residual lying between consecutive values of $\eps_k$ in the parameter grid.
For parameters $\eps_0, \eps_1,$ $\dots$, $\eps_N$, shelf $H_k$ holds residual values $r$ satisfying $\eps_{k-1} > r \geq \eps_{k}$. Residual entries smaller than $\eps_N$ are omitted from $H$ (since convergence does not require operating on them). Residual entries with values greater than $\eps_0$ are simply placed in shelf $H_0$.

\paragraph{PPR on a grid of $\eps$ parameters}
During the iterative step of \pgrid, then, rather than place a residual entry at the back of $Q$, we instead place the entry at the back of the appropriate shelf, $H_k$. Once all shelves $H_m(\vr)$ are cleared for $m \leq k$, then the residual has no entries larger than $\eps_k$, and so we have arrived at an approximation vector satisfying convergence criterion \eqref{eqn:conv-vec} with accuracy $\eps_k$. At this point, we perform a sweep procedure using the $\eps_k$-solution.
We then repeat the process until the next shelf is cleared, and a new $\eps_{k+1}$-solution is produced.

\textbf{PPR grid algorithm.}
The iterative step is as follows:
\begin{enumerate}
\item[ 1. ] Determine the top-most non-empty shelf, $H_k$.
\item[ 2. ] While $H$ contains an entry in shelf $k$ or above, do the following:
\item[ 3. ] \hspace*{1em} Pop an entry on or above shelf $H_k$, say value $r$ in entry $\vr_j$, and set  $\vr_j=0$.
\item[ 4. ] \hspace*{1em} Add $r$ to $\vx_j$.
\item[ 5. ] \hspace*{1em} Add $r\alpha\mP^T\ve_j$ to $\vr$.
\item[ 6. ] \hspace*{1em} For each entry of $\vr$ that was updated, move that node to the correct shelf, $H_m$, where $\eps_{m-1} > r \geq \eps_m$. If an entry is placed on a shelf higher than $k$, record the new top-shelf.
\item[ 7. ] Shelves $0$ through $k$ are cleared, so the $\eps_k$-solution is done; perform a sweep.
\end{enumerate}
Once all shelves are empty, the approximation with strictest accuracy, $\eps_N$, has been attained, and a final sweep procedure is performed.

\textbf{Shelf computation.}
In each iteration of \pprg we must place multiple entries into their respective ``shelves".
Here we show that computing the correct shelf where a value $r$ will be placed can be accomplished in constant time.

Let $\eps_k = \eps_0 \theta^k$ for a fixed value of $\theta \in (0,1)$. We want a value $r$ satisfying $\eps_{k-1} > r \geq \eps_k$ to be placed on shelf $k$. If $r \geq \eps_0$, then we place $r$ into shelf 0. Otherwise, making the substitution $\eps_k = \eps_0 \theta^k$ and performing some algebra yields
\[
k-1 < \frac{\log(r/\eps_0)}{\log(\theta)} \leq k,
\]
so $k$ can be computed by taking the ceiling of $\log(r/\eps_0)/\log(\theta)$, which is a constant time operation. Note that this process requires that $0 < \eps_k < 1$ holds for all $k$, that $\theta \in (0,1)$, and that $r > 0$.

\textbf{Top shelf.}
Each step of \pprg also requires determining the top non-empty shelf. This can be done in constant time by tracking what the top shelf is during each residual update. If $k$ is the top shelf immediately prior to step (2.4), then $k$ will still be the top shelf after the residual update is complete, unless one of the updates in step (6.) moves an entry to a shelf $l < k$. By checking for this event during the update of each individual residual entry in step (6.),  we will have knowledge of the top non-empty shelf at the beginning of each step, with only constant work per step.

Once the current working shelf is emptied, then it is possible that the next non-empty shelf is many shelves down, i.e. shelves $H_k$ and higher are emptied and the next non-empty shelf is $H_{k+c}$ for some large number $c$. Then determining $k+c$ takes $O(c)$ operations. However, this operation is performed every time the algorithm switches from one value of $\eps_k$ to the next. If there are $N$ values of $\eps_k$, then the total work in all calls of this top-shelf computation is bounded by $O(N)$.

\begin{theorem}\label{thm:pgrid}
Given a random walk transition matrix $\mP = \mA\mD\inv$, stochastic vector $\vv$, and input parameters $\alpha, \theta \in (0,1)$ and $\eps_k = \eps_0\theta^k$, our \pgrid algorithm outputs the best-conductance set found from sweeps over $\eps_k$-accurate degree-normalized solution vectors $\hvx$ to $(\mI - \alpha \mP)\vx = (1-\alpha)\vv$, for all values $\eps_k$ for $k = 0$ through $N$. The work in computing the diffusions is bounded by $O(\smash{\tfrac{1}{\eps_N(1-\alpha)}} )$. This improves on the method of computing the $N$ diffusions separately, which is bounded by $O\bigl(\tfrac{1}{\eps_N(1-\alpha)(1-\theta)} (1- \theta^{N+1}) \bigr)$. The two methods perform the same amount of sweep-cut work.
\end{theorem}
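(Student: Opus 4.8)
The plan is to follow the two-stage template of the proof of Theorem~\ref{thm:ppra}, separating the diffusion (push) work from the sweep work, and then to add the new ingredient: the head-to-head comparison with the naive scheme. First I would establish correctness. By the defining invariant of the shelf structure, every residual entry of magnitude at least $\eps_k$ must reside on one of the shelves $H_0,\dots,H_k$; hence once these shelves are emptied we have $\|\vr\|_\infty < \eps_k$. Combining this with the convergence analysis of Section~\ref{sec:mod-linsys} (that \eqref{eqn:conv-crit} implies \eqref{eqn:conv-vec}) and with the nonnegativity of the residual (each push adds only nonnegative mass $\alpha r/d$, so $\vx \geq \vvk{x}{k}$), the snapshot taken at the instant shelves $0$ through $k$ clear is a genuine $\eps_k$-accurate degree-normalized solution. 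Sweeping that snapshot therefore produces the advertised best-conductance set at each level $\eps_k$.

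\textbf{Diffusion work.} Since \pprg is exactly the push procedure with $\rho=0$, I can reuse the ``push work'' computation almost verbatim. The identity \eqref{eqn:etx}, namely $\ve^T\vvk{x}{k} = 1 - \ve^T\mD\vvk{r}{k} \le 1$, together with $\vvk{x}{k} = (1-\alpha)\mD\vvk{y}{k}$, gives $(1-\alpha)\sum_t r_t\,d_{j(t)} \le 1$. Because \pprg only pushes an entry once it sits on shelf $N$ or above, every pushed value satisfies $r_t \ge \eps_N$, so $\sum_t d_{j(t)} \le 1/(\eps_N(1-\alpha))$; as each push does $O(d_{j(t)})$ work on the neighbors of $j(t)$, the total diffusion work is $O\bigl(\tfrac{1}{\eps_N(1-\alpha)}\bigr)$. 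I would then fold in the shelf bookkeeping: the constant-time shelf placement via $\lceil \log(r/\eps_0)/\log\theta \rceil$ adds $O(d_{j(t)})$ per push, and the cumulative cost of advancing the top-shelf pointer telescopes to $O(N)$, which is dominated since $1/\eps_N = \theta^{-N}/\eps_0$ grows exponentially in $N$.

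\textbf{Comparison and sweep equality.} The genuinely new content is the comparison against computing the $N+1$ diffusions independently. For this I would apply the single-diffusion bound $O\bigl(\tfrac{1}{\eps_k(1-\alpha)}\bigr)$ to each $\eps_k$ and sum the geometric series: with $\eps_k = \eps_0\theta^k$, one has $\sum_{k=0}^N \eps_k^{-1} = \eps_0^{-1}\sum_{k=0}^N \theta^{-k} = \eps_N^{-1}(1-\theta^{N+1})/(1-\theta)$, yielding the stated bound $O\bigl(\tfrac{1-\theta^{N+1}}{\eps_N(1-\alpha)(1-\theta)}\bigr)$ and exhibiting the constant-factor speedup $(1-\theta^{N+1})/(1-\theta) \to 1/(1-\theta)$. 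For the sweep-cut claim I would observe that both schemes perform exactly one sweep per accuracy level $\eps_0,\dots,\eps_N$, and that the cost of the level-$\eps_k$ sweep is governed solely by the $\eps_k$-accurate solution it processes; since both schemes push all residual mass exceeding $\eps_k$ to form that solution, the per-level sweep costs coincide and the total sweep work is identical.

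\textbf{Main obstacle.} I expect the real difficulty to lie in the sweep-work equality rather than in either work bound, since the latter are close reformulations of the $\rho=0$ case of Theorem~\ref{thm:ppra}. The subtlety is that a push-to-threshold solution is not a priori order-independent, so I must argue that the snapshot \pprg takes at level $\eps_k$ and the from-scratch solution of the naive scheme induce sweeps of the same asymptotic cost. I would do this by bounding each sweep's cost through the number of nonzeros and their total degree (volume), quantities that the monotone push controls identically at a fixed accuracy, rather than by claiming the two solution vectors are literally equal. A secondary item to verify carefully is that jumping the top-shelf pointer across long runs of empty shelves does not spoil the constant-amortized-per-push accounting, which is exactly why the telescoped $O(N)$ term must be shown to be dominated by $O\bigl(\tfrac{1}{\eps_N(1-\alpha)}\bigr)$.
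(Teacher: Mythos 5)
Your proposal is correct and follows essentially the same route as the paper's proof: bound the diffusion work by reusing the $\rho=0$ push-work argument of Theorem~\ref{thm:ppra} (every pushed value is at least $\eps_N$, so mass conservation gives $O\bigl(\tfrac{1}{\eps_N(1-\alpha)}\bigr)$), observe that shelf bookkeeping is constant per push with an $O(N)$ top-shelf cost, and obtain the naive-method bound by summing the geometric series $\sum_{k=0}^N \eps_k^{-1} = \eps_N^{-1}(1-\theta^{N+1})/(1-\theta)$. Your additional care regarding the shelf-clearing correctness invariant and the sweep-work equality actually goes slightly beyond the paper, which asserts both points without elaboration.
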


\textbf{Proof.}
Note that the amount of push-work required to produce a diffusion with smallest accuracy $\eps_N$ is exactly the same as the push-work performed in computing an $\eps_N$ solution via \ppath; The only difference is in how we organize the residual and solution vectors. Hence, the push-work for \pgrid is bounded by $O(\eps_N\inv (1-\alpha)\inv)$. Updating the shelf structure for \pgrid requires only a constant number of operations in each iteration, and so the dominating operation in one step of \pgrid is the residual push work. Thus, the push-work bound for \pgrid is $O(\eps_N \inv (1-\alpha)\inv )$.

\textbf{Push-work for $N$ separate diffusions.}
As noted above, computing a diffusion with parameters $\eps_k$ and $\alpha$ requires push-work $O( \eps_k\inv (1-\alpha)\inv)$. 
Summing this over all values of $\eps_k$ gives
$\sum_{k=0}^N \eps_k\inv (1-\alpha)\inv  = (1-\alpha)\inv\sum_{k=0}^N (1/\eps_k)$. Substituting $\eps_0\theta^k$ in place of $\eps_k$, we see this sum is simply a scaled partial geometric series, $\sum_{k=0}^N \eps_k\inv = \eps_0\inv\theta^{-N}(1-\theta^{N+1})/(1-\theta)$.
Simplifying gives 
\begin{align*}
\sum_{k=0}^N \tfrac{1}{\eps_k(1-\alpha)} &= \tfrac{1}{\eps_N (1-\alpha)(1-\theta)}\left( 1-\theta^{N+1} \right),
\end{align*}
proving the bound on the push-work.
For our choices $\eps_0 = 10^{-1}$, $\eps_N = 10^{-6}/3$, and $\theta = 0.66$ (which correponds to using $N = 32$ diffusions), this quantity is roughly 2.9 times greater than computing only one diffusion, as our method does.

\textbf{Sweep work.}
The number of operations required in computing the diffusion is bounded by $O(\eps_N \inv (1-\alpha)\inv )$, but this does not include the work done in sweeping over the various $\eps_k$-approximation vectors. The sweep operation requires sorting the solution vector. As noted in the proof of work for \ppath, the number of nonzeros in the solution vector is bounded by $O(\eps_N\inv (1-\alpha)\inv )$, and so the sorting work is $O( \eps_N\inv (1-\alpha)\inv \log(\eps_N\inv (1-\alpha)\inv) )$. This implies that sorting is the dominant subroutine of the algorithm. In practice the bound on the number of nonzeros in the solution is loose, and the push operations comprise most of the labor.


\section{Experimental Results on Finding Small Conductance Sets}
\label{sec:experiments}
We have presented two frameworks for computing a single personalized PageRank diffusion across multiple parameter settings. Here we analyze their performance on a set of real-world social and information networks with varying sizes and edge-densities with the goal of identifying sets of small conductance. All datasets were altered to be symmetric and have 0s on their diagonals; this is done by deleting any self-edges and making all directed edges undirected. In addition to versions of the Facebook dataset analyzed in Section~\ref{sec:paths}, we test our algorithms on graphs including twitter-2010 from~\cite{Kwak2010-Twitter}, friendster and youtube from~\cite{Mislove-2007-measurement,Yang-2012-ground-truth}, dblp-2010 and hollywood-2009 in~\cite{Boldi-2008-QueryFlow,Boldi-2011-layered}, idk0304 from~\cite{Caida-2005-network}, and ljournal-2008 in~\cite{Chierichetti:2009:CSN:1557019.1557049}. See Table~\ref{tab:datasets} for a summary of their properties.

\begin{table}
\centering\fontsize{8}{9}\selectfont
\caption{Datasets} \label{tab:datasets}
\begin{tabularx}{\linewidth}{XXXX}
\\
\toprule
Graph                     &    $|V|$    &       $|E|$    &    $d_{\text{ave}}$    \\
\midrule
\texttt{itdk0304}      &     190,914 &        607,610 &   6.37 \\ 
\texttt{dblp}          &     226,413 &        716,460 &   6.33  \\
\texttt{youtube}      &   1,134,890 &      2,987,624 &   5.27  \\
\texttt{fb-one}        &   1,138,557 &      4,404,989 &   3.9 \\
\texttt{fbA}              &   3,097,165 &     23,667,394 &  15.3 \\%
\texttt{ljournal}         &   5,363,260 &     49,514,271 &  18.5 \\
\texttt{hollywood}        &   1,139,905 &     56,375,711 &  98.9 \\
\texttt{twitter}          &  41,652,230 &  2,041,892,992 & 98 \\
\texttt{friendster}       &  65,608,366 &  1,806,067,135 & 55.1 \\%
 \bottomrule
\end{tabularx}
\end{table}


\subsection{The effect of $\rho$ on conductance}\label{sec:rho-scaling}

Our first experimental study regards the selection of the parameter $\rho$ for finding sets of small conductance. We already established that $\rho = 0.9$ yielded qualitatively accurate solution path plots. However, for the specific problem of identifying small conductance sets, we find a curious behavior and get the best results with small values of $\rho$. We'll explain why this is shortly, but consider the results in Figure~\ref{fig:rho-scaling}. In the left subplot, we see the maximum difference between the minimum conductance found for any value of $\rho$ over a series of trials. It can be large, for instance, $0.7$ for one trial on the LiveJournal graph, where large $\rho$ shows worse results. In that same figure, we show the runtime scaling. It seems to scale with $1/(1-\rho)$, which is slightly better than expected from the bound in Theorem~\ref{thm:ppra}.

\begin{figure*}[h]
\centering
\includegraphics[width=0.5\linewidth]{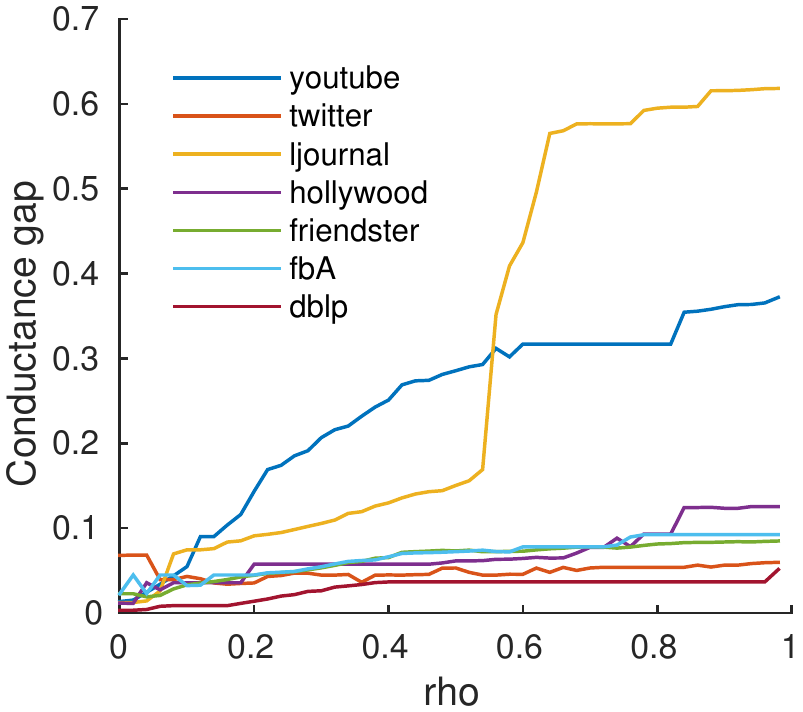}%
\includegraphics[width=0.5\linewidth]{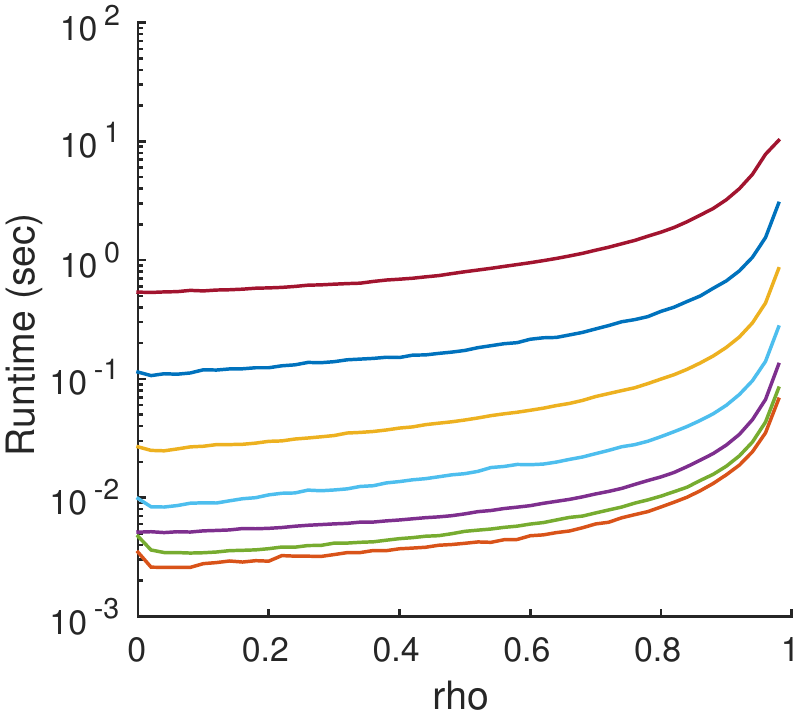}
\caption{
Here we display the behavior of the solution paths as $\rho$ scales from 0 to 1.
At left, we display the gap between $\phi(\rho)$, the best conductance found at that value of $\rho$, and $\phi_{\min}$, the minimum conductance found over all values of $\rho$. The lines depict the maximum difference over 100 trials of the quantity $\phi(\rho) - \phi_{\min}$. This plot shows that
the best conductance found becomes \emph{worse} as $\rho$ approaches 1. At right, the runtime appears to scale with $1/(1-\rho)$, which is better than the $1/(1-\rho)^2$ predicted by our theory.
}
\label{fig:rho-scaling}
\end{figure*}

The greatest difference between the best conductance found for any value of $\rho$ and the worst conductance found for any $\rho$ occurs in the livejournal graph, with a gap of nearly $0.7$. We discovered that the cause for this disparity is that large values of $\rho$ delay the propagation of the diffusion, and so 
the $\rho = 0.9$ paths at $\eps = 10^{-5}$ did not spread far enough to find a set of conductance near $0.07$. In contrast, all paths with $\rho < 0.5$ did diffuse deep enough into the graph to identify this good conductance set. Thus, it is possible that many of the differences in conductance performance between paths with different values of $\rho$ might in fact be caused by the \emph{size} of the region to which the diffusion spreads for a given value of $\eps$. Figure~\ref{fig:lj-anomaly} illustrates this finding.

\begin{figure*}[h]
\centering
\includegraphics[width=0.5\linewidth]{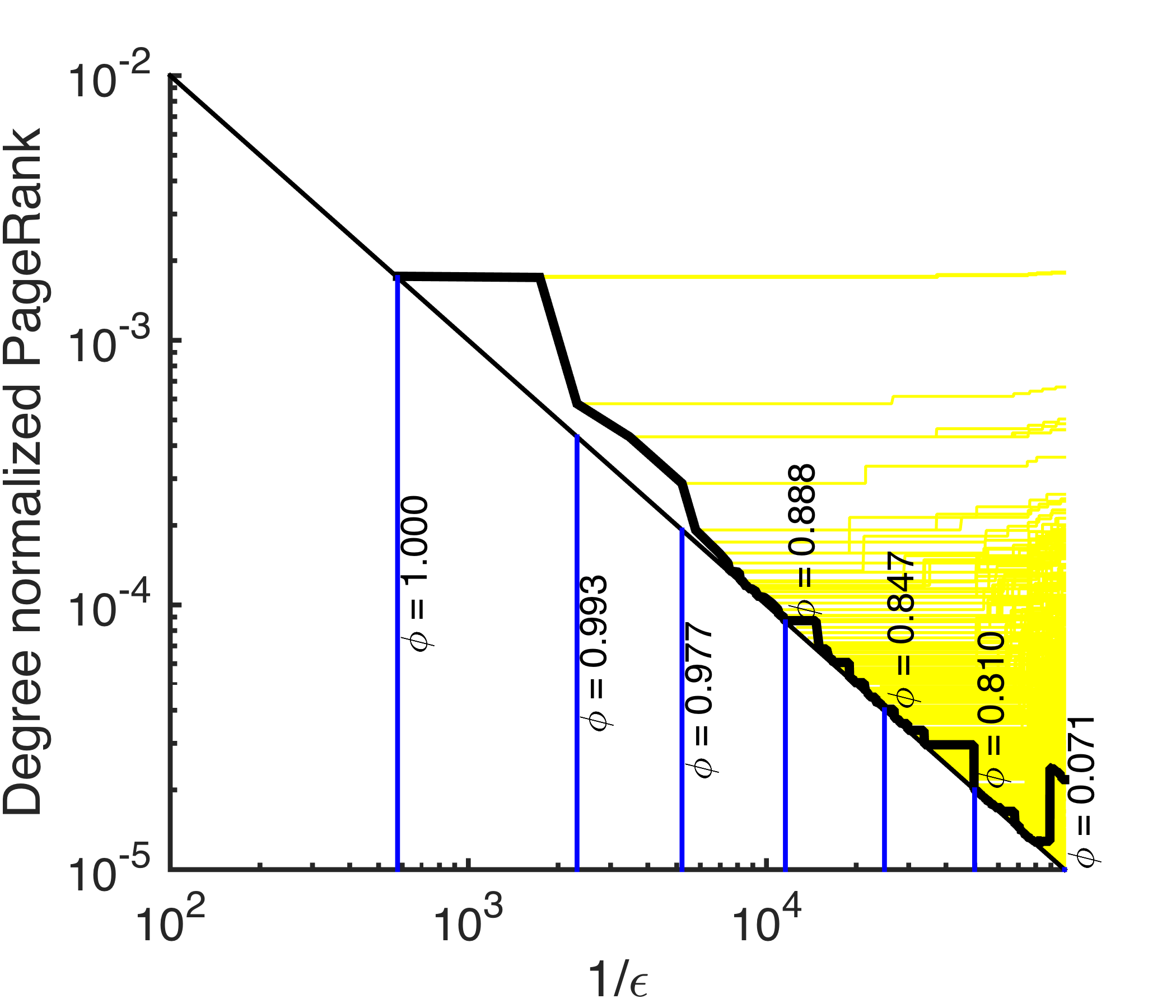}%
\includegraphics[width=0.5\linewidth]{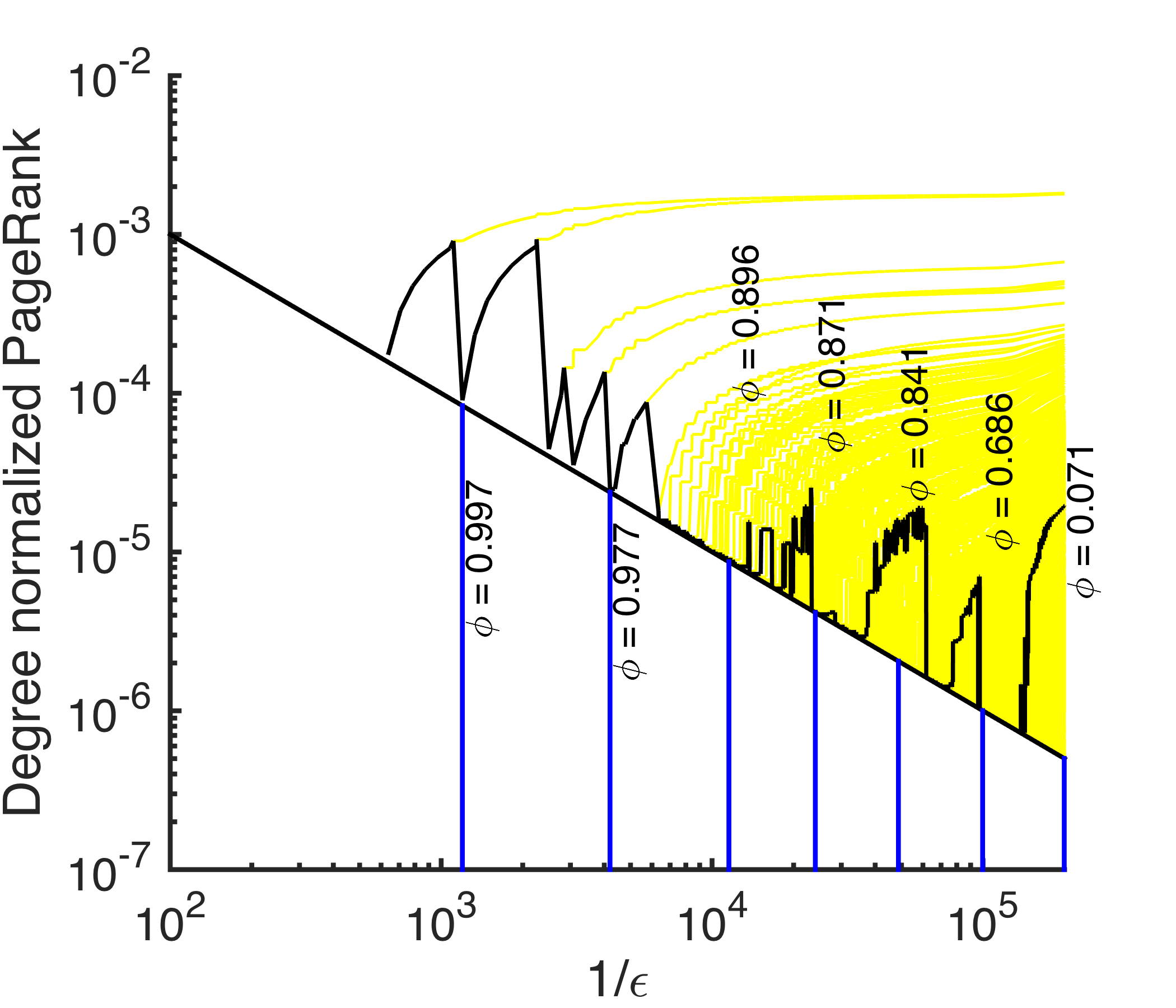}%
\caption{
At left, the $\rho = 0$ paths identify mostly poor conductance sets $\phi \approx 0.8$, and locate a set of low conductance, $\phi = 0.0788$, only toward the end of the diffusion. At right we see that the $\rho=0.9$ paths cannot find this set with $\eps = 10^{-5}$. With a slightly smaller accuracy ($\eps = 5\cdot10^{-6}$ instead of $\eps = 10^{-5}$), the diffusion is able to spread far enough to locate the good conductance set.
}
\label{fig:lj-anomaly}
\end{figure*}

Our conclusion from these experiments is that, for the goal of finding sets of small conductance, we should use small values of $\rho$ near zero. While it sometimes happens that $\rho > 0$ slightly improves conductance, this is not a reliable observation, and so for the remaining experiments on conductance, we set $\rho = 0$. (This has the helpful side effect of making it easier to compare with our \pgrid.)

\subsection{Runtime and conductance: \ppath}

Our first method, \ppath, is aimed at studying how PPR diffusions vary with the parameter $\eps$. Toward this, Table~\ref{tab:path-time} emphasizes the shear volume of distinct $\eps$-approximations that \ppath explores. We also want to highlight both the efficiency of our method over the na\"{\i}ve approach for computing the solution paths, and the additional information that the solution paths provide compared to a single diffusion.

With this in mind, our experiment proceeds as follows. On each data set, we selected 100 distinct nodes uniformly at random, and ran three personalized PageRank algorithms from that node,
with the settings $\alpha = 0.99$ and $\eps = 10^{-5}$. Table~\ref{tab:path-time} displays results for our solution paths algorithm (``path" in the table) compared with two other algorithms chosen to emphasize the runtime and the performance of \ppath.

To show how \ppath scales compared to the runtime of a single diffusion, and to emphasize that the solution paths can locate better conductance sets in some cases, we compare our solution paths method with a standard implementation for computing a single PPR diffusion (``single" in Table~\ref{tab:path-time}). Column 3 in the table gives the median runtime, taken over 100 trials, of the single diffusion. To compare, column 4 gives the median \emph{ratio} of ``path" time to ``single" time. Although \ppath is slower on the small graphs, on the larger graphs we see the runtime is nearly the same as for a single PPR diffusion. At the same time, column 2 shows that ``path" computes the results from hundreds or even thousands of diffusions, a significant gain in information over the single PPR diffusion. Finally, column 7 gives the best ratio of conductance found by ``path" compared to that found by ``single". This shows that the solution paths can improve conductance by $10\%$ to even $50\%$ compared to a single diffusion.

To display the efficiency of our algorithm in computing these many diffusion settings, we again use the standard PPR implementation, but this time set to compute the  diffusion for every accuracy setting $k^{-1}$ for $k = 1$ to $10,000$. This algorithm is ``mult" in Table~\ref{tab:path-time}, and is essentially a na\"{\i}ve method for approximating the solution paths. Column 5 gives the ratio of ``mult" time to ``single" time, and shows that this na\"{\i}ve approach to computing diffusions with multiple accuracies is prohibitively slow -- it is thousands of times slower than our ``path" method.

Lastly, we acknowledge here that both variations on the PPR diffusion are na\"{\i}ve approaches to the problem at hand. However, currently there is no other algorithm for computing the PPR solution paths which we can use as a more competitive baseline.


\begin{table}
\centering\fontsize{8}{9}\selectfont
\begin{tabularx}{\linewidth}{@{\hspace{0pt}}l@{\hspace{3pt}}l@{\hspace{6pt}}XXXXXXXXX@{\quad}X} \toprule
 Data & num $\eps$ 
& \multicolumn{3}{>{\hspace{-6pt}}l}{Single diff.~time (sec.)} 
& \multicolumn{3}{l}{\ppath~time (sec.)}
& \multicolumn{3}{l}{multi diff.~time (sec.)}& $\phi$-ratio\\ 
 \cmidrule(r{12pt}){3-5}
  \cmidrule(r{12pt}){6-8}
 \cmidrule(r{14pt}){9-11}
\cmidrule(r{8pt}){12-12}
   &                         & 25 & 50 & 75 & 25& 50 & 75 & 25& 50 & 75 &  \\ 
   \midrule
  itdk0304  & 5292  &   0.02 &   0.02 &   0.03  &   0.28 &   0.41 &   0.69 & 70.8 & 94.2 & 123.2 &   1.77 \\ 
      dblp  & 8138  &   0.02 &   0.02 &   0.02  &   0.40 &   0.51 &   0.65 & 87.3 & 97.9 & 111.5 &   1.12 \\ 
   youtube  & 2844  &   0.01 &   0.01 &   0.01  &   0.05 &   0.10 &   0.15 & 28.6 & 38.7 & 49.2 &   1.47 \\ 
    fb-one  & 3464  &   0.01 &   0.01 &   0.01  &   0.03 &   0.05 &   0.07 & 28.1 & 34.6 & 40.5 &   1.09 \\ 
       fbA  & 862  &   $<0.01$ &   $<0.01$ &   0.01  &   0.01 &   0.01 &   0.01 & 14.0 & 16.5 & 19.5 &   1.16 \\ 
  ljournal  & 2799  &   0.01 &   0.01 &   0.01  &   0.01 &   0.02 &   0.05 & 24.5 & 30.9 & 43.6 &   2.09 \\ 
 hollywood  & 423  &   $<0.01$ &   $<0.01$ &   $<0.01$  &   $<0.01$ &   $<0.01$ &   0.01 & 14.0 & 17.2 & 22.4 &   1.19 \\ 
   twitter  & 172  &   $<0.01$ &   $<0.01$ &   $<0.01$  &   $<0.01$ &   $<0.01$ &   0.01 & 6.5 & 10.3 & 18.1 &   1.05 \\ 
friendster  & 402  &   $<0.01$ &   $<0.01$ &   $<0.01$  &   $<0.01$ &   $<0.01$ &   0.01 & 11.1 & 13.6 & 16.6 &   1.09 \\ 
 \bottomrule
\end{tabularx}
\caption{
Runtime and conductance comparison of the solution paths (all accuracies from $10^{-1}$ to $10^{-5}$) with (1) a single PPR diffusion with accuracy $10^{-5}$ (labelled ``single'') and (2) 10,000 PPR diffusions, accuracies $k^{-1}$ for $k=1$ to 10,000 (labelled ``mult''). 
On each dataset we selected 100 distinct nodes uniformly at random and ran the algorithms with the settings $\alpha = 0.99$ and $\eps = 10^{-5}$ and $\rho = 0$.
Column ``num $\eps$'' displays the median number of distinct accuracy parameters $\eps$ explored by our algorithm \ppath.
Columns under ``Time'' report 25th, 50th, and 75th percentile of runtimes over these 100 trials. The column ``$\phi$-ratio" lists the largest (best) ratio of conductance achieved by a single diffusion with conductance achieved by our \ppath, showing our method can improve on the conductance found by a single diffusion by as much as a factor of $2.09$.
}\label{tab:path-time}
\end{table}

\subsection{Runtime and Conductance: \pgrid}
\label{sec:runtime-grid}
We compare our second method \pgrid with a method called \pgrow, which uses the push framework described in Section~\ref{sec:alg-push}. Each of these algorithms uses a variety of accuracy settings, and returns the set of best conductance found from performing a sweep-cut over the diffusion vector resulting from each accuracy setting. The algorithm \pgrow has 32 pre-set accuracy parameters $\eps_k$. In contrast with \pgrid, which takes its accuracy parameters from a log-spaced grid $\eps_k = \eps_0\theta^k$, the parameters for \pgrow are chosen as the inverses of values from the grid $10^j\cdot\left[\begin{array}{cccccc} 2 & 3 & 4 & 5 & 10 & 15 \end{array}\right]$ for $j=0,1, \cdots ,4$, along with two additional parameters, $10^{-6}/2$ and $10^{-6}/3$.

In addition to $\alpha$,	our method \pgrid has the parameters $\eps_0$ and $\eps_N$, the laxest and strictest accuracies (respectively), and $\theta$, which determines the fineness of the grid of accuracy parameters.
We use the values $\eps_0 = 10^{-1}$ and $\eps_N = 10^{-6}/3$, and use values of $\theta$ corresponding to $N = 32$, $64$, and $1256$ different accuracy parameters.

We emphasize that this comparison with the \pgrow method is not as na\"{\i}ve as it might seem: out of the 32 calls that it makes, in practice the very last call (with the strictest value of $\eps$) constitutes near $37\%$ of the total runtime. This means that making  only a single call would save little work, and would sacrifice the information from the other 31 (smaller) approximations. Furthermore, the primary optimizations that would be made to the \pgrow framework to improve on this are exactly the optimizations that we make with our \pgrid algorithm, namely avoiding re-doing push work between diffusion computations for different values of $\eps$.

Because the two algorithms compute the same PageRank diffusion, comparing their runtimes here allows us to study what proportion of the total work is  made up of redundant push operations, and what proportion is comprised of the sweep cut procedures, which both algorithms perform anew for each diffusion. To study this, we highlight the results in Table~\ref{tab:grid-v-grow-time} which displays the runtimes for \pgrow and the \emph{ratios} of the runtimes of \pgrid with \pgrow for computing the best-conductance set from the same number of different diffusions, $N=32$. We also display \pgrid results for the cases $N=64$ and $1256$ to show how the algorithm scales with the fineness of the grid.

To compare runtimes, we perform the following for each different dataset. For 100 distinct nodes selected uniformly at random, we ran both algorithms with the setting $\alpha = 0.99$. We display the best ($25\%$) and worst ($75\%$) quartile of performance of each algorithm and parameter setting.
On almost all datasets, we see that \pgrid with $N=32$ has a speedup of a factor 2 to 3.
This is consistent with our theoretical comparison of the two runtimes in Theorem~\ref{thm:pgrid}, which predicts a factor of 2.9 difference in the push-work that the two algorithms perform.
Then, columns 6 through 9 of Table~\ref{tab:grid-v-grow-time} display how quickly \pgrid can compute even more diffusions: whereas \pgrow takes around 1 second to compute and analyze $N=32$ diffusions, \pgrid takes little more than half that time to compute on $N=64$ diffusions (columns 6 and 7). Columns 8 and 9 show that \pgrid can compute and analyze $N=1256$ diffusions, nearly 40 times as many as \pgrow, in an amount of time only 1.10 to 6.59 times greater than the time required by \pgrow.


\begin{table}
\centering\fontsize{8}{9}\selectfont
\begin{tabularx}{\linewidth}{@{}lXX@{\qquad}XX@{\qquad}XX@{\qquad}XX@{}} \toprule
 & \multicolumn{2}{>{}l}{time (sec.)} & \multicolumn{2}{>{\hspace{-6pt}}l}{time ratio} & \multicolumn{2}{>{\hspace{-6pt}}l}{time ratio} & \multicolumn{2}{>{\hspace{-6pt}}l}{time ratio} \\
Data & \multicolumn{2}{>{}l}{\pgrow } & \multicolumn{2}{>{\hspace{-6pt}}l}{\pgrid$N=32$} & \multicolumn{2}{>{\hspace{-6pt}}l}{\pgrid $N = 64$} & \multicolumn{2}{>{\hspace{-6pt}}l}{\pgrid $N = 1256$} \\
\cmidrule(l{3pt}r{32pt}){2-3} %
\cmidrule(r{26pt}){4-5}
\cmidrule(r{26pt}){6-7}
\cmidrule(r{4pt}){8-9}
& 25 & 75 &
25 & 75 & 
 25 & 75 &
 25 & 75 \\
\midrule
   itdk0304  &  6.23 &  8.73  &  0.56  &  0.61  &  0.61  &  0.66  &  1.10  &  1.20  \\ 
       dblp  &  4.52 &  7.21  &  0.56  &  0.62  &  0.62  &  0.67  &  1.28  &  1.43  \\ 
    youtube  &  1.73 &  2.39  &  0.39  &  0.50  &  0.54  &  0.65  &  3.35  &  4.38  \\ 
     fb-one  &  1.25 &  1.60  &  0.33  &  0.39  &  0.45  &  0.53  &  3.72  &  4.38  \\ 
        fbA  &  0.49 &  0.65  &  0.47  &  0.55  &  0.63  &  0.72  &  5.99  &  6.59  \\ 
   ljournal  &  0.82 &  1.20  &  0.44  &  0.55  &  0.58  &  0.74  &  4.57  &  6.12  \\ 
  hollywood  &  0.28 &  0.64  &  0.34  &  0.49  &  0.44  &  0.60  &  3.47  &  5.00  \\ 
    twitter  &  0.13 &  0.37  &  0.39  &  0.44  &  0.54  &  0.60  &  4.61  &  5.44  \\ 
 friendster  &  0.34 &  0.49  &  0.39  &  0.44  &  0.51  &  0.58  &  3.90  &  4.32  \\ 
 \bottomrule
 \vspace{1pt}
\end{tabularx}
\caption{Runtime comparison of our \pgrid with \pgrow.
For each dataset, we selected 100 distinct nodes uniformly at random and ran \pgrow with 32 and \pgrid with $N$ different accuracy settings $\eps_k$. Columns 2 and 3 display the $25$th and $75$th percentile runtimes for \pgrow (in seconds). The other columns display the median over the 100 trials of the \emph{ratios} of the runtimes of \pgrid (using the indicated parameter setting) with the runtime of \pgrow on the same node. These results demonstrate 
that our algorithm computing over $N=32$ accuracy parameters $\eps_k$ achieves the factor of 2 to 3 speed-up predicted by our theory in Section~\ref{sec:grid}.
}\label{tab:grid-v-grow-time}
\end{table}

The conductances displayed in Table~\ref{tab:grid-v-grow-cond} are taken from the same trials as the runtime information in Table~\ref{tab:grid-v-grow-time}. As with the table of runtimes, for each dataset the table gives the $25\%$ (best) and $75\%$ (worst) percentiles of conductance scores produced by each algorithm on the 100 trials. We see nearly identical conductance scores for \pgrow and \pgrid with $N = 32$, which we expect because the two perform nearly identical work. It is interesting to note, however, that increasing the number of diffusions can result in significantly improved conductance scores in some cases, as with $N=1256$ on the ``fb-one" and ``hollywood" datasets. This demonstrates concretely the potential effect of using a broad swath of parameter settings for $\eps$ to study the meso-scale structure.
Moreover, it demonstrates that even a finely spaced mesh of $\eps$ values, as with \pgrow and \pgrid with $N=64$, can miss informative diffusions.

\noindent
\begin{table}
\centering\fontsize{8}{9}\selectfont
\begin{tabularx}{\linewidth}{@{}lX@{\qquad}XX@{\qquad}XX@{\qquad}XX@{}} \toprule
Data & \texttt{grow} & \multicolumn{2}{>{\hspace{-6pt}}l}{$N=32$} & \multicolumn{2}{>{\hspace{-6pt}}l}{$N = 64$} & \multicolumn{2}{>{\hspace{-6pt}}l}{$N = 1256$} \\
\cmidrule(r{36pt}){3-4}
\cmidrule(r{36pt}){5-6}
\cmidrule(r{18pt}){7-8}
&  & 25 & 75 & 
 {25} & {75} &
 {25} & {75} \\
\midrule
   itdk0304  &  0.06  &  1.00  &  1.00  &  1.00  &  1.01  &  1.00  &  1.02  \\ 
       dblp  &  0.07  &  1.00  &  1.00  &  1.00  &  1.00  &  1.00  &  1.01  \\ 
    youtube  &  0.18  &  1.01  &  1.30  &  1.09  &  1.50  &  1.21  &  1.72  \\ 
     fb-one  &  0.37  &  1.06  &  1.16  &  1.10  &  1.26  &  1.18  &  1.37  \\ 
        fbA  &  0.56  &  1.00  &  1.05  &  1.00  &  1.06  &  1.00  &  1.09  \\ 
   ljournal  &  0.32  &  1.00  &  1.01  &  1.00  &  1.01  &  1.00  &  1.01  \\ 
  hollywood  &  0.29  &  1.00  &  1.01  &  1.00  &  1.01  &  1.00  &  1.02  \\ 
    twitter  &  0.80  &  1.00  &  1.00  &  1.00  &  1.00  &  1.00  &  1.00  \\ 
 friendster  &  0.85  &  1.00  &  1.00  &  1.00  &  1.00  &  1.00  &  1.01  \\ 
 \bottomrule
\vspace{1pt}
\end{tabularx}
\caption{Conductance comparison of our \pgrid with \pgrow.
Column 2 displays the median of the conductances found by \pgrow in the same 100 trials presented in Table~\ref{tab:grid-v-grow-time}.
The other columns display the $25\%$ and $75\%$ percentiles of the \emph{ratio} of the conductances achieved by \pgrow and \pgrid for the same seed set. For example, on the dataset `fb-one', the conductances found by \pgrow are $18\%$ larger than those found by \pgrid with $N=1256$ accuracy settings~--- and that comparison is on the quartile of trials where \pgrid compares the \emph{worst} to \pgrow. We report the ratios in this manner (rather than their reciprocals) because in this form the values displayed are greater than 1, which distinguishes the values from conductance scores (which are between 0 and 1).
} \label{tab:grid-v-grow-cond}
\end{table}

\section{Related work}
\label{sec:related}

As we already mentioned, regularization paths are common in statistics~\cite{Efron-2004-lars,Hastie-2009-elements}, and they help guide model selection questions.
In terms of clustering and community detection, solution paths are extremely important for a new type of convex clustering objective function~\cite{Hocking-2011-clusterpath,Lindsten-2011-convex-kmeans}. Here, the solution path is closely related to the number and size of clusters in the model.

One of the features of the solution path that we utilize to understand the behavior of the diffusion is the stability of the set of best conductance over time. In ref.~\cite{Delvenne2010-stability}, the authors use a closely related concept to study the persistence of communities as a different type of temporal relaxation parameter varies. Again, they use the stability of communities over regions of this parameter space to indicate high-quality clustering solutions. 

In terms of PageRank, there is a variety of work that considers the PageRank vector as a function of the teleportation parameter $\alpha$~\cite{boldi2009-functional,langville2006-book}. Much of this work seeks to understand the sensitivity of the problem with respect to $\alpha$. For instance, we can compute the derivative of the PageRank vector with respect to $\alpha$. It is also used to extrapolate solutions to accelerate PageRank methods~\cite{brezinski2005-extrapolation}. More recently, varying $\alpha$ was used to show a relationship between personalized-PageRank-like vectors and spectral clustering~\cite{Mahoney-2012-local}. Note that PageRank solution paths as $\alpha$ varies would be an equally interesting parameter regime to analyze. The parameter $\alpha$ functions akin to $\eps$ in that large values of $\alpha$ cause the diffusion to propagate further in the graph. 


\section{Conclusions and discussion}
\label{sec:conclusion}

We proposed two algorithms that utilize the push step in new ways to generate refined insights on the behavior of diffusions in networks. The first is a method to rapidly estimate the degree-normalized PageRank solution path as a function of the tolerance $\eps$. This method is slower than estimating the solution of a single diffusion in absolute run time, but still fast enough for use on large graphs. We designed that method, and the associated degree-normalized PageRank solution path plot, in order to reveal new insights about regions at different size-scales in large networks. The second method is a fast approximation to the solution path on a grid of logarithmically-spaced $\eps$ values. It uses an interesting application of bucket sort to efficiently manage these diffusions. We demonstrate that both of these algorithms are fast and local on large networks.

The seeded PageRank solution plots, in particular, are effective at identifying a number of subtle structures that emerge as a diffusion propagates from a set of seed nodes to the remainder of the network. We hope that these become useful tools to diagnose and study the properties of large networks. 

As recently established by Ghosh et al.~\cite{Ghosh-2014-cheeger}, there are many related diffusion methods that all share Cheeger-like inequalities for specific definitions of conductance. We anticipate that our solution path algorithm could apply to any of these diffusions as well. For instance, our recent result on estimating the heat kernel diffusion in large graphs is based on the push step as well~\cite{Kloster-2014-hkrelax}; we anticipate only mild difficulty in adapting our results to that diffusion.

Fast access to the solution path trajectories provides a number of additional opportunities that we have not yet explored. We may be able to track multiple clusters directly by managing intermediate data.  We may be able to find near-optimal conductance sets that are larger than those that directly optimize the objective.  Also, nodes in an egonet or larger set could be further clustered by properties of their solution paths instead of their connectivity patterns.

%
%
%
%

\section*{Acknowledgments}
We thank the following people for their careful reading of several early drafts: Huda Nassar, Bryan Rainey, and Varun Vasudevan. This work was supported by NSF CAREER Award CCF-1149756.

\bibliographystyle{abbrv}
\bibliography{main}

\label{lastpage}

\end{document}